\newtheorem{theorem}{Theorem}[section]
\newtheorem{definition}[theorem]{Definition}
\newtheorem{corollary}[theorem]{Corollary}
\newtheorem{example}[theorem]{Example}
\def\F{{\mathbb{F}}}
\newenvironment{proof}{\textbf{Proof:}}{\hspace*{\fill}
\nolinebreak\hspace*{\fill}$\Box$\newline\vspace{1mm}}
\newcommand{\im}{\textnormal{im}}
\newcommand{\st}{\ensuremath{ \ | \ }}
\title{\LARGE \bf
Superregular matrices and applications to convolutional codes
}
\author[rvt]{P. J. Almeida }
\ead{palmeida@ua.pt}
\author[rvt]{D. Napp  }
\ead{diego@ua.pt}
\author[rvt]{R.Pinto \corref{cor1}}
\ead{raquel@ua.pt}
\address[rvt]{Department of Mathematics, University of Aveiro, Campus Universit\'ario de Santiago, 3810--193 Aveiro, Portugal.}
\begin{document}
\begin{abstract}
The main results of this paper are twofold: the first one is a matrix theoretical result. We say that a matriz is superregular if all of its minors that are not trivially zero are nonzero. Given a $a\times b$, $a\geq b$, superregular matrix over a field, we show that if all of its rows are nonzero then any linear combination of its columns, with nonzero coefficients, has at least $a-b+1$ nonzero entries. Secondly, we make use of this result to construct convolutional codes that attain the maximum possible distance for some fixed parameters of the code, namely, the rate and the Forney indices. These results answer some open questions on distances and constructions of convolutional codes posted in the literature \cite{GRS2006,mceliece98}.
\end{abstract}

\begin{keyword}
convolutional code \sep Forney indices \sep optimal code \sep superregular matrix 

2000MSC: 94B10, 15B33
\end{keyword}

\maketitle

\section{Introduction}

Several notions of superregular matrices (or totally positive) have appeared in different areas of mathematics and engineering having in common the specification of some properties regarding their minors \cite{Ando1987,CIM1998,Gan59a,Pinkus2009,Roth1985}. In the context of coding theory these matrices have entries in a finite field $\F$ and are important because they can be used to generate linear codes with good distance properties. A class of these matrices, which we will call {\em full superregular}, were first introduced in the context of block codes. A full superregular matrix is a matrix with all of its minors different from zero and therefore all of its entries nonzero. It is easy to see that a matrix is full superregular if and only if any $\F$-linear combination of $N$ columns (or rows) has at most $N-1$ zero entries. For instance, Cauchy and nonsingular Vandermonde matrices are full superregular. 
It is well-known that a systematic generator matrix $G = [I\ | \ B]^\top$  generates a maximum distance separable (MDS) block code if and only if $B$ is full superregular, \cite{Roth1989}.\\[1ex]

Convolutional codes are more involved than block codes and, for this reason, a more general class of superregular matrices had to be introduced. A lower triangular matrix $B$ was defined to be superregular if all of its minors, with the property that all the entries in their diagonals are coming from the lower triangular part of $B$, are nonsingular, see \cite[Definition 3.3]{GRS2006}. In this paper, we call such matrices {\em LT-superregular}. Note that due to such a lower triangular configuration the remaining minors are necessarily zero. Roughly speaking, superregularity asks for all minors that are possibly nonzero, to be nonzero. In \cite{GRS2006} it was shown that LT-superregular matrices can be used to construct convolutional codes of rate $k/n$ and degree $\delta$ that are strongly MDS provided that $(n-k)\mid \delta$. This is again due to the fact that the combination of columns of these LT-superregular matrices ensures the largest number of possible nonzero entries for any $\F$-linear combination (for this particular lower triangular structure). In other words, it can be deduced from \cite{GRS2006} that a lower triangular matrix $B=[b_0 ~ b_1 \dots b_{k-1}]\in \F^{n \times n}$, $b_i$ the columns of $B$, is LT-superregular  if and only if for any $\F$-linear combination of columns $ b_{i_1}, b_{i_2}, \dots , b_{i_N} $ of $B$, with $i_j < i_{j+1}$, then $wt(b) \geq wt(b_{i_1})-   N +1 = (n-i_1)- N +1$. \\[1ex]

It is important to note that in this case due to this triangular configuration it is hard to come up with an algebraic construction of LT-superregular matrices.  There exist however two general constructions of these matrices \cite{ANP2013,GRS2006,HuSmTr2008} although they need large field sizes. Unfortunately, LT-superregular matrices allow to construct convolutional codes with optimal distance properties only for certain given parameters of the code. This is because the constant matrix associated to a convolutional code have, in general, blocks of zeros in its lower triangular part. Hence, in order to construct convolutional codes with good distance properties for any set of given parameters a more general notion of superregular matrices needs to be introduced. It is the aim of this paper to do so by generalizing the notion of superregularity to matrices with any structure of zeros. To this end we introduce the notion of \emph{nontrivial} minor (i.e., at least one term in the summation of the Leibniz formula for the determinant is nonzero). Hence, a matrix will be called {\em superregular} if all of its nontrivial minors are nonzero. This notion naturally extends the previous notions of superregularity as they have all of its possible nonzero minors different from zero.\\

A key result in this paper is that any $\F$-linear combination of columns of a superregular matrix have the largest possible number of nonzero components (to be made more precise in Section \ref{sec:superregular}). This is a general matrix theoretical result and it stands in its own right. As an application, we will show that this result will ensure that any convolutional code associated to a superregular matrix have the maximum possible distance. \\

In \cite{GRS2006,mceliece98} it was proved that the distance of a convolutional code with rate $k/n$ and different Forney indices $\nu_1 < \dots < \nu_\ell$ is upper bounded by $n(\nu_1+1)-m_1+1$ where $m_1$ is the multiplicity of the Forney index $\nu_1$. Whether this bound was optimal or not was left as an open question. In this work we show that it is indeed optimal by presenting a class of convolutional codes that achieve such a bound. In the particular case that the given Forney indices have two consecutive values, say $\nu$ and $\nu +1$, then our construction yields a new class of (strongly) MDS convolutional codes. 

\section{Convolutional codes}\label{sec:intro}

In this section we recall basic material from the theory of convolutional codes that is relevant to the presented work. In this paper we consider convolutional codes constituted by codewords having finite support.

Let $\mathbb{F}$ be a finite field  and $\mathbb{F}[z]$ the ring of polynomials with coefficients in $\mathbb{F}$.
A \emph{(finite support) convolutional code} $\mathcal{C}$ of rate $k/n$ is an $\mathbb{F}[z]$-submodule of $\mathbb{F}[z]^{n}$, where $k$ is the rank of $\mathcal{C}$ (see \cite{RosenthalS99}). The elements of $\mathcal{C}$ are called \emph{codewords}.

A full column rank matrix $G(z) \in \mathbb{F}[z]^{n \times k}$ whose columns constitute a basis for $\mathcal{C}$ is called an \emph{encoder} of $\mathcal{C}$.
So,
\begin{align*}
  \mathcal{C}
    & = \im_{\mathbb{F}[z]} G(z)  \nonumber
     = \left\{
          v(z) \in \mathbb{F}[z]^{n}
          \st
          v(z) = G(z) u(z)
          \ \text{with} \
          u(z) \in \mathbb{F}[z]^{k}
        \right\}.
\end{align*}

Convolutional codes of rate $k/n$ are linear devices which map a sequence of $k$-dimensional information words $u_0, u_1, \dots, u_\epsilon$ (expressed as $u(z)= \sum_{i=0}^\epsilon u_i z^i $), into a sequence of $n$-dimensional codewords $v_0, v_1, \dots, v_\gamma$ (written as $v(z)= \sum_{i = 0}^\gamma v_i z^i  $). In this sense it is the same as block codes. The difference is that convolutional encoders have a internal ``storage vector" or ``state vector". Consequently, convolutional codes are often characterized by the code rate and the structure of the storage device.  \\

The $j$-th \emph{column degree} of $G(z)=[g_{ij}(z)]\in \mathbb{F}[z]^{n \times k}$ (also known as constraint length of the $j$-th input of the matrix $G(z)$, see \cite{JZ1999}) is defined as
$$
\nu_j = \max_{1\leq i    \leq n } \deg g_{ij}(z)
$$
the \emph{memory} $m$ of the polynomial encoder as the maximum of the columns degrees, that is,
$$
m= \max_{1\leq j    \leq k } {\nu_j}
$$
and the \emph{total memory}  (or overall constrain length) as the sum of the constraint length
$$
\nu= \sum_{1\leq j \leq k} \nu_j.
$$

The encoder $G(z)$ can be realized by a linear sequential circuit consisting of $k$ \emph{shift registers}, the $j$-th of length $\nu_j$, with the outputs formed as sums of the appropriate shift registers contents.

Two full column rank matrices $G_{1}(z),G_{2}(z) \in \mathbb{F}[z]^{n \times k}$ are said to be equivalent encoders if $\im_{\mathbb{F}[z]} G_{1}(z) =   \im_{\mathbb{F}[z]} G_{2}(z)$, which happens if and only if there exists a unimodular matrix $U(z) \in \mathbb{F}[z]^{k \times k}$ such that $G_{2}(z) = G_{1}(z) U(z)$ \cite{JZ1999,RosenthalS99}.

Among the encoders of the code, the column reduced are the ones with smallest sum of the column degrees.

\begin{definition}
Given a matrix $G(z)=[g_{ij}(z)]\in \mathbb{F}[z]^{n \times k}$ with column degrees $\nu_1, \dots, \nu_k$ let $G^{hc}$ (hc stands for highest coefficient) be the constant matrix whose $(i,j)$-entry is the coefficient of degree $\nu_j$ if $\deg g_{ij}=\nu_j$ or zero otherwise. We say that $G(z)$ is \emph{column reduced} if $G^{hc}$ is full column rank.
\end{definition}

It was shown by Forney \cite{forney75} that two equivalent column reduced encoders have the same column degrees up to a permutation. For this reason such degrees are called the \emph{Forney indices} of the code, see \cite{mceliece98}. The number of Forney indices with a certain value $\nu$ is called the multiplicity of $\nu$. The \emph{degree} of a convolutional code is the sum of the Forney indices of the code.

%
%

\begin{definition}
An important distance measure for a convolutional code $\mathcal{C} $ is the
\textit{distance} $ \mathrm{dist}(\mathcal{C})$ defined as
\[
 \mathrm{dist}(\mathcal{C}):= \left\{ \mathrm{wt}(v(D)) \ \ | \ \
  v(D) \in \mathcal{C} \quad \text{and} \quad
  v(D)\neq \vec{0} \right\},
\]
where $\mathrm{wt}(v(D))$ is the Hamming weight  of a polynomial vector
$$v(D)=\sum\limits_{i \in \mathbb N} v_i D^i
\in \mathbb F[D]^n,$$  defined as $${\rm wt}(v(D))=\sum\limits_{i
\in \mathbb N} {\rm wt}(v_i),$$ where ${\rm wt}(v_i)$ is
the number of nonzero components of $v_i$. \end{definition}

In~\cite{RosenthalS99}, Rosenthal and Smarandache showed that the distance
of a convolutional code of rate $k/n$ and degree $\delta$ must be upper bounded by
\begin{eqnarray}\label{eq1}
   \mathrm{dist}(\mathcal{C})\leq (n-k)\left(\left\lfloor
      \frac{\delta}{k}\right\rfloor + 1\right)+\delta +1.
\end{eqnarray}
This bound was called the  \textit{generalized Singleton bound}
since it generalizes in a natural way the Singleton
bound for block codes (when $\delta=0$). A convolutional code of rate $k/n$ and degree $\delta$  with its distance equal to
the generalized Singleton bound was called a \textit{maximum distance
  separable} (MDS) code~\cite{RosenthalS99}.  It was also observed in~\cite{mceliece98,RosenthalS99} that  if $\mathcal C$ is MDS,
then its set of Forney indices must have $\xi:= k (\left\lfloor \frac{\delta}{k} \right\rfloor +1) -\delta$ indices of value $\left\lfloor \frac{\delta}{k} \right\rfloor$ and $k-\xi$ indices of value $\left\lfloor \frac{\delta}{k} \right\rfloor + 1$ (this set of indices are called in the literature ``generic set of column indices" or ``compact"). Few algebraic constructions of MDS convolutional codes are known, see \cite{Smarandache2001,NaRo2015}. The particular case where $(n-k)$ divides $\delta$ was investigated in \cite{GRS2006}. Note that in this case all the Forney indices of a MDS convolutional code are equal. It is the aim of this paper to study the distance properties of convolutional codes of given rate and \emph{any} set of Forney indices. Equivalents bounds of the distance of these codes were independently given in \cite{RosenthalS99} and in \cite{mceliece98}.

\begin{theorem}\cite{RosenthalS99} \label{ubk1D}
Let $\cal C$ be a convolutional code with rate $k/n$ and different Forney indices $\nu_1 < \dots < \nu_{\ell}$  with corresponding multiplicities $m_1, \dots, m_{\ell}$. Then the distance of $\mathcal{C}$ must satisfy
$$
\mathrm{dist}(\mathcal{C}) \leq n(\nu_1 + 1) - m_1 +1.
$$
\end{theorem}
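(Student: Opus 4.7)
The plan is to bound the distance of $\mathcal{C}$ by passing to a carefully chosen subcode for which the classical generalized Singleton bound (\ref{eq1}) already yields the desired quantity. Specifically, I would start from a basic, column reduced encoder $G(z)\in\mathbb{F}[z]^{n\times k}$ of $\mathcal{C}$, whose column degrees are precisely the Forney indices $\nu_1<\nu_2<\dots<\nu_\ell$ with multiplicities $m_1,\dots,m_\ell$. Without loss of generality, arrange the columns so that the first $m_1$ of them are exactly the columns of degree $\nu_1$, and form the submatrix $\widetilde{G}(z)\in\mathbb{F}[z]^{n\times m_1}$ consisting of those columns.

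Next I would verify that $\widetilde{G}(z)$ is itself a basic, column reduced encoder, so that the subcode $\widetilde{\mathcal{C}}:=\im_{\mathbb{F}[z]}\widetilde{G}(z)\subseteq\mathcal{C}$ is a genuine convolutional code of rate $m_1/n$ whose Forney indices are all equal to $\nu_1$, and hence whose degree equals $m_1\nu_1$. Column reducedness is inherited because $\widetilde{G}^{hc}$ is the submatrix of $G^{hc}$ formed by the first $m_1$ columns, and a submatrix of columns of a full column rank matrix is still full column rank. Basicness is inherited in the same fashion: if $L(z)G(z)=I_k$ with $L(z)$ polynomial, then the top $m_1$ rows of $L(z)$ provide a polynomial left inverse of $\widetilde{G}(z)$.

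Now apply the generalized Singleton bound (\ref{eq1}) to $\widetilde{\mathcal{C}}$ with parameters $k\leftarrow m_1$, $n\leftarrow n$, $\delta\leftarrow m_1\nu_1$. Since $\lfloor m_1\nu_1/m_1\rfloor=\nu_1$, this bound evaluates to
\[
(n-m_1)(\nu_1+1)+m_1\nu_1+1=n(\nu_1+1)-m_1+1.
\]
Finally, because $\widetilde{\mathcal{C}}\subseteq\mathcal{C}$, every nonzero codeword of $\widetilde{\mathcal{C}}$ is a nonzero codeword of $\mathcal{C}$, so $\mathrm{dist}(\mathcal{C})\leq\mathrm{dist}(\widetilde{\mathcal{C}})\leq n(\nu_1+1)-m_1+1$, as required.

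The only delicate point is the verification that $\widetilde{G}(z)$ really defines a convolutional code of degree exactly $m_1\nu_1$; that is, that the sum of its Forney indices equals the sum of its column degrees. This is exactly what column reducedness, combined with basicness, guarantees (the column degrees of a basic column reduced encoder coincide with the Forney indices of the code it generates). The rest is algebraic bookkeeping: the containment $\widetilde{\mathcal{C}}\subseteq\mathcal{C}$ and the arithmetic evaluation of the generalized Singleton bound. No deeper ingredient is needed, and in particular the superregularity machinery developed later in the paper is not required for this upper bound; it will be needed only to show the bound is tight.
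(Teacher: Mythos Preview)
The paper does not actually supply a proof of Theorem~\ref{ubk1D}; it is quoted from \cite{RosenthalS99} (and the equivalent bound in \cite{mceliece98}) and used as a black box. So there is no in-paper argument to compare against.

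Your argument is correct and is essentially the standard route to this inequality: restrict to the subcode generated by the $m_1$ columns of smallest degree and feed its parameters $k\leftarrow m_1$, $\delta\leftarrow m_1\nu_1$ into the generalized Singleton bound~(\ref{eq1}), then use $\widetilde{\mathcal{C}}\subseteq\mathcal{C}$ to pull the inequality back to $\mathcal{C}$. The arithmetic $(n-m_1)(\nu_1+1)+m_1\nu_1+1=n(\nu_1+1)-m_1+1$ is right.

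One small simplification: you do not need basicness at all. In the framework of this paper, a convolutional code is any $\mathbb{F}[z]$-submodule, and the Forney indices are by definition the column degrees of any column reduced encoder. Since $\widetilde{G}^{hc}$ is a set of columns of the full-column-rank matrix $G^{hc}$, the submatrix $\widetilde{G}(z)$ is automatically column reduced, hence $\widetilde{\mathcal{C}}$ has all Forney indices equal to $\nu_1$ and degree $m_1\nu_1$; that is already enough to invoke~(\ref{eq1}). The left-inverse discussion can be dropped. Conversely, if one unwinds how~(\ref{eq1}) itself is proved, your argument amounts to the direct construction: choose a nonzero $u_0\in\mathbb{F}^{m_1}$ killing any $m_1-1$ prescribed coordinates of $\widetilde{G}(z)u_0$, which is a codeword supported on at most $\nu_1+1$ coefficient vectors and hence of weight at most $n(\nu_1+1)-m_1+1$.
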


A convolutional code of rate $k/n$ with different Forney indices $\nu_1< \dots < \nu_{\ell}$ and with corresponding multiplicities $m_1, \dots, m_{\ell}$ and distance $n(\nu_1 + 1) - m_1 +1$ is said to be an \emph{optimal} $(n,k,\nu_1,m_1)$ convolutional code. Note that a convolutional code of rate $k/n$ and degree $\delta$ is MDS if and only if is an optimal $(n,k,\left\lfloor \frac{\delta}{k} \right\rfloor, k (\left\lfloor \frac{\delta}{k} \right\rfloor +1) -\delta)$ convolutional code.\\

It was left as an open question whether there always exist optimal $(n,k,\nu_1,m_1)$ convolutional codes for all rates and Forney indices $\nu_1\leq \dots \leq \nu_{k}$. In the next section, we consider a special class of matrices that will allow us to exhibit convolutional codes with this property.

\section{Superregular Matrices}\label{sec:superregular}

 In this section, we recall some pertinent definitions on superregular matrices and introduce a new construction of superregular matrices that we will use to obtain MDS convolutional codes. Such matrices have some similarities with the ones introduced in \cite{ANP2013}. They have similar entries and, therefore, some properties are the same, even if the structure of these new matrices is different.

Let $\mathbb{F}$ be a field, $A=[\mu_{i\ell}]$ be a square matrix of order $m$ over $\mathbb{F}$ and $S_m$ the symmetric group of order $m$. The determinant of $A$ is given by
\[\mid A\mid =\sum_{\sigma\in S_m} (-1)^{\mbox{sgn}(\sigma)}\mu_{1\sigma(1)}\cdots\mu_{m\sigma(m)}.\]
Whenever we use the word {\em term}, we will be considering one product of the form $\mu_{1\sigma(1)}\cdots\mu_{m\sigma(m)}$, with $\sigma\in S_m$, and the word {\em component} will be reserved to refer to each of the $\mu_{i\sigma(i)}$, with $1\leq i\leq m$ in a term. Denote $\mu_{1\sigma(1)}\cdots\mu_{m\sigma(m)}$ by $\mu_\sigma$.

A {\em trivial term} of the determinant is a term $\mu_\sigma$, with at least one component $\mu_{i\sigma(i)}$ equal to zero. If $A$ is a square submatrix of a matrix $B$ with entries in $\mathbb{F}$, and all the terms of the determinant of $A$ are trivial, we say that $\mid A\mid $ is a {\em trivial minor} of $B$ (if $B=A$ we simply say that $\mid A\mid $ is a trivial minor). We say that a matrix $B$ is {\em superregular} if all its nontrivial minors are different from zero.


In the next theorem we study the weight of vectors belonging to the image of a superregular matrix.

\begin{theorem}\label{SRZeros}
Let $\mathbb{F}$ be a field and $a,b\in\mathbb{N}$, such that $a\geq b$ and $B\in\mathbb{F}^{a\times b}$. Suppose that $u=[u_i]\in\mathbb{F}^{b\times 1}$ is a column matrix such that $u_i\neq 0$ for all $1\leq i\leq b$. If $B$ is a superregular matrix and every row of $B$ has at least one nonzero entry then $\mathrm{wt}(Bu)\geq a-b+1$.
\end{theorem}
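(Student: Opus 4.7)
The plan is to prove the bound by induction on $b$, arguing by contradiction from the assumption $\mathrm{wt}(Bu) \leq a - b$. The base case $b = 1$ is immediate: every row of $B \in \mathbb{F}^{a \times 1}$ being nonzero means each entry of the single column is nonzero, so $Bu = u_1 B$ has weight $a = a - b + 1$.

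For the inductive step, set $I := \{i : (Bu)_i = 0\}$, so that $|I| \geq b$. I would form the bipartite graph $G$ on vertex classes $I$ and $\{1,\dots,b\}$ with an edge $(i,j)$ whenever $B_{ij}\neq 0$, and then apply K\"onig's theorem. In the first case, $G$ contains a matching of size $b$, say pairing rows $i_1, \dots, i_b \in I$ with a permutation $\sigma$ of $\{1,\dots,b\}$. Then the $b\times b$ submatrix $B_{\{i_1,\dots,i_b\},\{1,\dots,b\}}$ has a nontrivial determinant, because $\sigma$ contributes a nonvanishing term to the Leibniz expansion; superregularity of $B$ forces this determinant to be nonzero. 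On the other hand, $B_I u = 0$ combined with $u \neq 0$ forces every square submatrix of $B_I$ to be singular, giving a contradiction.

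Otherwise, the maximum matching in $G$ has size less than $b$, so by K\"onig the minimum vertex cover does too. I would pick such a cover, writing its pieces as $R\subset I$ and $C \subset \{1,\dots,b\}$ with $|R| + |C| < b$, and set $S := I\setminus R$ and $T := C$. Then $|S| \geq |I| - |R| > |C| = |T|$, and $B_{ij} = 0$ whenever $i \in S$ and $j \notin T$; combined with $(Bu)_i = 0$ for $i \in S \subset I$, this gives $B|_{S,T}\, u|_T = 0$. The submatrix $B|_{S,T}$ inherits superregularity from $B$, and since each row of $B$ is nonzero with its nonzero entries for $i \in S$ forced to lie inside $T$, every row of $B|_{S,T}$ is nonzero (in particular $|T|\geq 1$). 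Since $u|_T$ still has nonzero coordinates and $|T| < b$, the inductive hypothesis applied to $B|_{S,T}$ yields $\mathrm{wt}(B|_{S,T}\, u|_T) \geq |S|-|T|+1 \geq 1$, contradicting $B|_{S,T}\, u|_T = 0$.

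The hard part I expect will be this second case: recognizing that a trivial minor corresponds precisely to a deficient bipartite matching, and then using K\"onig's theorem to extract a strictly smaller superregular instance on which the induction can close. The most delicate technical check will be that $B|_{S,T}$ still has no zero row, for this is exactly where the hypothesis that every row of $B$ is nonzero enters; without it the inductive hypothesis loses its premise and the whole argument collapses.
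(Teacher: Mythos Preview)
Your proof is correct and considerably more streamlined than the paper's. Both arguments proceed by descent: assuming $\mathrm{wt}(Bu)\le a-b$, one extracts a strictly smaller superregular submatrix $B'$ (with no zero rows) and a vector $u'$ with all nonzero coordinates satisfying $B'u'=0$, then iterates. The difference lies in how the smaller instance is produced. The paper works directly inside a $b\times b$ submatrix $B_1$ with $B_1u=0$; knowing $|B_1|$ must be a trivial minor, it builds by hand an elaborate combinatorial structure (the sequences $D_\lambda,E_\lambda,F_\lambda$) that tracks how zeros are forced by superregularity, eventually isolating a block $B_2$ of strictly smaller order with the same properties. Your argument replaces all of this with a single invocation of K\"onig's theorem: the observation that ``$|B_1|$ is a trivial minor'' is precisely the statement that the bipartite support graph of $B_1$ has no perfect matching, and K\"onig then hands you the vertex cover $R\cup C$ from which $S=I\setminus R$, $T=C$ give the smaller instance in one step. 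What the paper's approach buys is self-containment (no appeal to an external combinatorial theorem); what yours buys is clarity, brevity, and a transparent explanation of \emph{why} the descent works---the structural content of the paper's $D_\lambda,E_\lambda,F_\lambda$ machinery is exactly the Hall/K\"onig obstruction to a perfect matching.
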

\begin{proof}
Suppose that $\mathrm{wt}(Bu)\leq a-b$, then there exists a square submatrix of $B$ of order $b_1=b$, say $B_1$, such that $B_1 u=0$,  and so $\mid B_1\mid =0$, i. e., the columns of $B_1$ are linearly dependent. Since $B$ is superregular, $\mid B_1\mid$ is a trivial minor. By hypothesis $u_i\neq 0$, for all $1\leq i\leq b$, which implies that every row of $B_1$ must have at least two nonzero entries. On the other hand,
$B_1$ may have some of its columns identically equal to zero.

Using the fact that $B_1$ is also superregular, we are going to show that there exists, up to permutation of rows and columns, a square submatrix $B_2$ of $B_1$ of order $b_2$, with $b_2<b_1$, such that $B_2\widetilde{u}=0$, where $\widetilde{u}$ is a column matrix with $b_2$ rows whose entries are elements of $u$. Therefore, $\mid B_2\mid$ is a trivial minor which implies that the columns of $B_2$ are linearly dependent. Also every row of $B_2$ will have at least two nonzero entries. But then, proceeding in this way, we would obtain an infinite sequence $B_1, B_2, B_3, \dots$ of square matrices of orders $b_1, b_2, b_3, \dots$, respectively, with $0<\dots<b_3<b_2<b_1$  all having at least two nonzero entries in every row. Of course, this cannot happen, hence, $\mathrm{wt}(Bu)\geq a-b+1$. This is an application of the infinite descent method of Fermat.

Since $u_i\neq 0$, for all $1\leq i\leq b$, if some of the columns of $B_1$ are identically equal to zero, then the remaining columns are still linearly dependent. Let $\overline{B}$ be the matrix formed by the columns of $B_1$ with at least one nonzero entry and let $\widehat{B}$ be a square submatrix of $\overline{B}$ with the same number of columns. Denote by $m$ the order of $\widehat{B}$. Clearly $m\leq b_1$.

Let $t$ be the dimension of the subspace generated by the columns of $\widehat{B}$. Then $\widehat{B}$ has a $t\times t$ submatrix whose columns are linearly independent. Therefore, its determinant is nonzero and $t<m$. After an adequate permutation of the rows and columns of $\widehat{B}$ we may express the minor $\mid \widehat{B}\mid$ as $\mid \widehat{B} \mid=\pm\mid M \mid$, where
\[M=[\mu_{i\,j}]=
\left [\;\begin{array}{c|c}
 & \\
\qquad\widetilde{B}\qquad & \qquad C\qquad\\
 & \\ \hline
  & \\
\qquad R\qquad & \qquad Z\qquad\\
  &
\end{array}\;\right ],
\]
and where $\widetilde{B}$ is a $t\times t$ nonsingular matrix with nonzero entries in its principal diagonal, i. e.  with $\mu_{i\,i}\neq 0$, for $1\leq i\leq t$, $C$ is a $t\times (m-t)$ matrix, $R$ is a $(m-t)\times t$ matrix, $Z$ is a $(m-t)\times (m-t)$ matrix.

Using the superregularity of $\widehat{B}$, we are going to show that the matrix $M$ has a well defined structure of zeros in its entries.

For any $t+1\leq i_0\leq m$ and any $t+1\leq j_0\leq m$ define $V_{i_0\,j_0}=[v_{i\,j}]$ to be the square $(t+1)\times (t+1)$ matrix formed by $\widetilde{B}$, the $i_0-t$ row of $R$, the $j_0-t$ column of $C$ and the entry $(i_0-t\,j_0-t)$ of $Z$, i. e.

\begin{equation}\label{vij}
V_{i_0\,j_0}=[v_{i\,j}]\;\;\mbox{where}\;\;v_{i\,j}=\left\{\begin{array}{ll}
\mu_{i\,j} & \mbox{if}\;\; 1\leq i,j\leq t\\
\mu_{i_0\,j} & \mbox{if}\;\; i=t+1\;\;\mbox{and}\;\; 1\leq j\leq t\\
\mu_{i\,j_0}  & \mbox{if}\;\; j=t+1\;\;\mbox{and}\;\; 1\leq i\leq t\\
\mu_{i_0\,j_0}  & \mbox{if}\;\; i=j=t+1.
\end{array} \right.
\end{equation}

First, we will show that $Z=0$.

Let  $t+1\leq i_0\leq m$ and $t+1\leq j_0\leq m$ and consider the matrix $V_{i_0\,j_0}$ defined in (\ref{vij}). By the definition of $t$, the columns of $V_{i_0\,j_0}$ are linearly dependent, hence $\mid V_{i_0\,j_0}\mid=0$. Since $\mid V_{i_0\,j_0}\mid$ is a minor of $B_1$ and $B_1$ is superregular, $\mid V_{i_0\,j_0}\mid$ must be a trivial minor. Therefore, the term $v_\sigma$ with $\sigma(i)=i$ is trivial. But $v_{i\,i}=\mu_{i\,i}\neq 0$ for all $1\leq i\leq t$ because these are the entries in the main diagonal of $\widetilde{B}$. Therefore $\mu_{i_0\,j_0}=v_{t+1\,t+1}=0$. This allows us to conclude that $Z=0$.

\vspace{7mm}
Now, we will construct, recursively, three sequences of sets $D_0, D_1, \dots, D_\nu$ and $E_0, E_1, \dots, E_\nu$  and $F_0, F_1, \dots, F_\nu$, where $\nu$ is an integer. 

Let
\begin{equation}\label{defio}
\left.\begin{array}{ll}
F_0=\{1, 2, \dots, t\}\; & \mbox{and}\;\;D_0=E_0=\{t+1, t+2, \dots m\};\\
\mbox{For}\;\;1\leq\lambda\leq\nu, & \\
i\in D_\lambda\;\; & \mbox{if}\;\; i\in F_{\lambda-1}\;\;\mbox{and exists}\;\;i_0\in D_{\lambda-1}\;\;\mbox{such that}\;\;\mu_{i_0\,i}\neq 0;\\
j\in E_\lambda\;\; & \mbox{if}\;\;j\in F_{\lambda-1}\;\;\mbox{and exists}\;\;j_0\in E_{\lambda-1}\;\;\mbox{such that}\;\;\mu_{j\,j_0}\neq 0;\\
k\in F_\lambda\;\; & \mbox{if}\;\;k\in F_{\lambda-1},\; k\notin D_\lambda\;\;\mbox{and}\;\;k\notin E_\lambda.
\end{array} \right \}
\end{equation}
In particular, the set $D_1$ will be the the set formed by the indices of the columns of $R$ that have at least one nonzero entry and $E_1$ will be the set formed by the indices of the rows of $C$ with at least one nonzero entry.

Let $\lambda\in\{1, 2, \dots, \nu\}$. From (\ref{defio}), we immediately have
\begin{equation}\label{zerosR1}
\mbox{if}\;\;i_0\in D_{\lambda-1}\;\;\mbox{and}\;\;i_1\in (F_{\lambda-1}\setminus D_\lambda)\;\;\mbox{then}\;\;\mu_{i_0\,i_1}=0.
\end{equation}
and
\begin{equation}\label{zerosC1}
\mbox{if}\;\;j_0\in E_{\lambda-1}\;\;\mbox{and}\;\;j_1\in (F_{\lambda-1}\setminus E_\lambda)\;\;\mbox{then}\;\;\mu_{j_1\,j_0}=0.
\end{equation}

Let $d_\lambda$, $e_\lambda$ and $f_\lambda$ be the cardinalities of the sets $D_\lambda$, $E_\lambda$ and $F_\lambda$, respectively, and $f_0=t$. Define $\nu$ to be the smallest positive integer for which
\begin{equation}\label{m-t}
m-t\geq \min\{d_\nu,e_\nu,f_\nu\}.
\end{equation}
Observe that one or two sets of $D_\nu$, $E_\nu$ or $F_\nu$ may be empty sets, but, since $m>t$ and $f_{\nu-1} > m-t$, all the other sets of the three sequences are nonempty.

Let us assume that
\begin{equation}\label{empty}
D_\lambda\cap E_\lambda=\emptyset,\;\;\mbox{for any}\;\;\lambda\in\{1, 2, \dots, \nu\},
\end{equation}
and that, for any $\lambda\in\{1,\dots,\nu\}$,
\begin{equation}\label{zeros2}
\mbox{if}\;\; i\in D_\lambda\;\;\mbox{and}\;\;j\in E_\lambda\;\;\mbox{then}\;\;\mu_{i\,j}=0.
\end{equation}
We will prove (\ref{empty}) and (\ref{zeros2}) later.

\vspace{.2cm}

Since $F_{\lambda-1}=D_\lambda\cup E_\lambda\cup F_\lambda$ and, from (\ref{defio}) and (\ref{empty}), $D_\lambda$, $E_\lambda$ and $F_\lambda$ are pairwise disjoint. we have
\begin{equation}\label{fed}
f_{\lambda-1}=d_\lambda+e_\lambda+f_\lambda,
\end{equation}
\begin{equation}\label{f-d}
F_{\lambda-1}\setminus D_\lambda=E_\lambda\cup F_\lambda,
\end{equation}
and
\begin{equation}\label{f-e}
F_{\lambda-1}\setminus E_\lambda=D_\lambda\cup F_\lambda.
\end{equation}

From (\ref{zerosR1}), (\ref{zerosC1}), (\ref{f-d}), (\ref{f-e}) and since $Z=0$, we obtain that if $i\in D_0$ and $j\in E_0$ then the $i$-th row of $M$ has at most $t-(f_1+e_1)=d_1$ nonzero entries and the $j$-th column of $M$ has at most $t-(f_1+d_1)=e_1$ nonzero entries.

Now, given $i\in D_{\lambda}$, for $\lambda\in\{1, 2, \dots, \nu-1\}$, the $i$-th row of $M$ has zeros in all entries $(i,j)$ with $j\in E_{\lambda+1}\cup F_{\lambda+1}$, by (\ref{zerosR1}) and (\ref{f-d}), and in all entries $(i,j)$ with $j\in E_{\lambda}$. We show next that $\mu_{ij}=0$ for $i \in D_{\lambda}$ and $j \in E_\kappa$ for any $1\leq\kappa<\lambda$. If $\kappa = \lambda - 1$, since by (\ref{f-e}) $i \in F_{\lambda-1} \backslash E_{\lambda}$, then by (\ref{zerosC1}) $\lambda_{ij}=0$. If $\kappa < \lambda-1$, since $i \in D_{\lambda}$, then $i \in F_{\kappa + 1}$, so by (\ref{zerosC1}) and (\ref{f-e}), as $j \in E_{\kappa}$, we have $\mu_{ij}=0$. Thus, since $F_{\lambda + 1}, E_{\lambda+1}, E_{\lambda}, \dots, E_1$ are pairwise disjoint,  the $i$-th row of $M$ has at most
$t-(e_{\lambda+1}+f_{\lambda+1}+e_\lambda+e_{\lambda-1}+\cdots+e_1 )$ nonzero entries. Using (\ref{fed}) a few times, we conclude that the number of nonzero entries of the $i$-th row of $M$ is at most
\[d_1+\cdots+d_\lambda+d_{\lambda+1}.\]

Similarly, if $j\in E_\lambda$ , for $\lambda \in \{1,2, \dots, \nu - 1\}$, then the $j$-th column of $M$ has at most
\[e_1+\cdots+e_{\lambda}+e_{\lambda+1}\]
nonzero entries.

Finally, the $i$-th row of $M$, with $i\in D_\nu$, has zeros in all entries $(i,j)$, with $j\in E_{\nu}$, by (\ref{zeros2}), and in all entries $(i,j)$ with $j\in E_\kappa$, with $1\leq\kappa<\nu$, by (\ref{zerosC1}) and (\ref{f-e}). Hence, the $i$-th row of $M$ has at most $t-(e_\nu+e_{\nu-1}+\cdots+e_1)$ nonzero entries. Using (\ref{fed}), we conclude that the number of nonzero entries of the $i$-th row of $M$ is at most
\[d_1+d_2+\cdots+d_\nu+f_\nu.\]

By a similar reasoning, we conclude that the $j$-th column of $M$, with $i\in D_\nu$, has at most
\[e_1+e_2+\cdots+e_\nu+f_\nu\]
nonzero entries.

Permuting the rows of $M$ we obtain a matrix $\bar M$ such that:
\begin{itemize}
\item the last $m-t$ rows remain unchanged;
\item the rows of $M$ in $D_1$ will become the rows $m-t -1, \dots, m-t-d_1$ in $\bar M$;
\item for $\lambda = 2, \dots, \nu$, the rows of $M$ in $D_{\lambda}$ will become the rows $m-t-1-\sum_{i=1}^{\lambda - 1} d_i, \dots, m-t-\sum_{i=1}^{\lambda} d_i$.
\end{itemize}
Applying to $\bar M$ the following column permutations we obtain a matrix $N$ such that:
\begin{itemize}
\item the last $m-t$ columns remain unchanged;
\item the columns of $\bar M$ in $E_1$ will become the columns $m-t -1, \dots, m-t-e_1$ in $N$;
\item for $\lambda = 2, \dots, \nu$, the columns of $\bar M$ in $E_{\lambda}$ will become the columns $m-t-1-\sum_{i=1}^{\lambda - 1} e_i, \dots, m-t-\sum_{i=1}^{\lambda} e_i$.
\end{itemize}

Thus, the matrix $N$ satisfies the following properties:
\begin{enumerate}
\item its last $m-t+d_1+\cdots+d_\nu$ rows have at most $d_1+\cdots+d_\nu+f_\nu$ nonzero entries in the first $d_1+\cdots+d_\nu+f_\nu$ columns and zeros afterwards.
\item its last $m-t+d_1+\cdots+d_{\nu-1}$ rows have at most $d_1+\cdots+d_\nu$ nonzero entries in the first $d_1+\cdots+d_\nu$ columns and zeros afterwards.
\item its last $m-t+e_1+\cdots+e_{\nu-1}$ columns have at most $e_1+\cdots+e_\nu$ nonzero entries in the first $e_1+\cdots+e_\nu$ rows and zeros afterwards.
\end{enumerate}

Let us define a square submatrix $B_2$ of $N$ of order $b_2$, with $b_2 < b_1$, and such that $B_2 \tilde u = 0$ where $\tilde u$ is a column matrix whose entries are elements of $u$.
From the inequality (\ref{m-t}), three cases may happen:

\begin{enumerate}
\item If $m-t\geq f_\nu$, let $b_2=d_1+\cdots+d_\nu+f_\nu$ and take $B_2$ to be a square submatrix of order $b_2$, of the matrix formed by the last $m-t+d_1+\cdots+d_\nu$ rows of $N$ and the first $d_1+\cdots+d_\nu+f_\nu$ columns of $N$.
\item If $m-t\geq d_\nu$, let $b_2=d_1+\cdots+d_\nu$ and take $B_2$ to be a square submatrix of order $b_2$, of the matrix formed by the last $m-t+d_1+\cdots+d_{\nu-1}$ rows of $N$ and the first $d_1+\cdots+d_\nu$ columns of $N$.
\item If $m-t\geq e_\nu$, let $b_2=t-(e_1+\cdots+e_{\nu-1})$ and take $B_2$ to be a square submatrix of order $b_2$, of the matrix formed by the last $m-(e_1+\cdots+e_{\nu})$ rows of $N$ and the first $t-(e_1+\cdots+e_{\nu-1})$ columns of $N$.
\end{enumerate}

In either case, choosing $\widetilde{u}=[u_{i_1}, \dots, u_{i_{b_2}}]^T$, accordingly, we have $B_2\widetilde{u}=0$. Notice that $b_2<t<m<b_1$. Since $N$ is superregular, $\mid B_2\mid$ is a trivial minor which implies that the columns of $B_2$ are linearly dependent. Also every row of $B_2$ will have at least two nonzero entries. Hence $B_2$ has the same properties as $B_1$.

Hence, using infinite descent, we always get a contradiction. Thus
\[\mathrm{wt}(Bu)\geq a-b+1.\]

\vspace{4mm}
To finalize the proof we will show that the assumptions (\ref{empty}) and (\ref{zeros2}) are satisfied.
\vspace{2mm}

i) Proof of assumption (\ref{empty}): let $1\leq \lambda \leq \nu$ and $k\in D_\lambda$. Then, by (\ref{defio}), there exists $i_{\lambda-1}\in D_{\lambda-1}$ such that $\mu_{i_{\lambda-1}\,k}\neq 0$. Let $j_{\lambda-1}\in E_{\lambda-1}$. We are going to prove that $\mu_{k\,j_{\lambda-1}}=0$ and, so, $k\notin E_\lambda$.

Since $i_{\lambda-1}\in D_{\lambda-1}$ then, by (\ref{defio}), there exist $i_0\in D_0$, $i_1\in D_1, \dots,$ $i_{\lambda-2}\in D_{\lambda-2}$, all different, such that $\mu_{i_\ell\,i_{\ell+1}}\neq 0$ for $0\leq\ell\leq\lambda-2$. Moreover, since $j_{\lambda-1}\in E_{\lambda-1}$ then, by (\ref{defio}), there exist $j_0\in E_0$, $j_1\in E_1, \dots,$ $j_{\lambda-2}\in E_{\lambda-2}$, all different, such that $\mu_{j_{\ell+1}\,j_\ell}\neq 0$ for $0\leq\ell\leq\lambda-2$.

Consider the matrix $V_{i_0\,j_0}$, defined in (\ref{vij}), and the permutation $\widetilde{\sigma}\in S_{t+1}$ defined below, depending on $\lambda$. For $\lambda=1$, the permutation is defined by
\begin{itemize}
\item $\widetilde{\sigma}(k)=t+1$,
\item $\widetilde{\sigma}(t+1)=k$,
\item $\widetilde{\sigma}(s)=s$ for $s\in\{1,2,\dots,t\}\setminus\{k\}$,
\end{itemize}
For $\lambda=2$, by
\begin{itemize}
\item $\widetilde{\sigma}(i_1)=k$,
\item $\widetilde{\sigma}(k)=j_1$,
\item $\widetilde{\sigma}(j_1)=t+1$,
\item $\widetilde{\sigma}(t+1)=i_1$,
\item $\widetilde{\sigma}(s)=s$ for $s\in\{1,2,\dots,t\}\setminus\{i_1,j_1,k\}$,
\end{itemize}
And, for $\lambda\geq 3$, by
\begin{itemize}
\item $\widetilde{\sigma}(i_{\lambda-1})=k$,
\item $\widetilde{\sigma}(k)=j_{\lambda-1}$,
\item $\widetilde{\sigma}(j_{\ell+1})=j_\ell$, for $1\leq\ell\leq\lambda-2$,
\item $\widetilde{\sigma}(j_1)=t+1$,
\item $\widetilde{\sigma}(t+1)=i_1$,
\item $\widetilde{\sigma}(i_{\ell})=i_{\ell+1}$, for $1\leq\ell\leq\lambda-2$,
\item $\widetilde{\sigma}(s)=s$ for $s\in\{1,2,\dots,t\}\setminus\{i_1,\dots,i_{\lambda-1},j_1,\dots,j_{\lambda-1},k\}$.
\end{itemize}

Now, using the superregularity of $\widehat{B}$, we conclude that $\mu_{k\,j_{\lambda-1}}=0$. Thus, $k\notin E_{\lambda}$.

Similarly, if $k\in E_{\lambda}$ then $\mu_{i k}=0$ for all $i\in D_{\lambda-1}$. Therefore, $k\notin D_\lambda$. Hence $D_\lambda\cap E_{\lambda}=\emptyset$.

\vspace{7mm}
ii) Proof of assumption (\ref{zeros2}): let $1\leq \lambda \leq \nu$, $i_\lambda\in D_\lambda$ and $j_\lambda\in E_\lambda$. Then, by (\ref{defio}), there exist sequences of integers $i_0\in D_0$, $i_1\in D_1, \dots,$ $i_{\lambda-1}\in D_{\lambda-1}$, all different, such that $\mu_{i_\ell\,i_{\ell+1}}\neq 0$ for $0\leq\ell\leq\lambda-1$, and $j_0\in E_0$, $j_1\in E_1, \dots,j_{\lambda-1}\in E_{\lambda-1}$, all different, such that $\mu_{j_{\ell+1}\,j_\ell}\neq 0$ for $0\leq\ell\leq\lambda-1$. Consider the matrix $V_{i_0,j_0}$ defined in (\ref{vij}) and the permutation $\widetilde{\sigma}\in S_{t+1}$ defined below.

If $\lambda=1$ then $\widetilde{\sigma}$ is defined by
\begin{itemize}
\item $\widetilde{\sigma}(i_1)=j_1$
\item $\widetilde{\sigma}(j_1)=t+1$,
\item $\widetilde{\sigma}(t+1)=i_1$,
\item $\widetilde{\sigma}(s)=s$ for $s\in\{1,2,\dots,t\}\setminus\{i_1,j_1\}$,
\end{itemize}
and, if $\lambda\geq 2$, by
\begin{itemize}
\item $\widetilde{\sigma}(i_{\lambda-1})=i_\lambda$,
\item $\widetilde{\sigma}(i_\lambda)=j_\lambda$
\item $\widetilde{\sigma}(j_{\ell+1})=j_\ell$, for $1\leq\ell\leq\lambda-1$,
\item $\widetilde{\sigma}(j_1)=t+1$,
\item $\widetilde{\sigma}(t+1)=i_1$,
\item $\widetilde{\sigma}(i_{\ell})=i_{\ell+1}$, for $1\leq\ell\leq\lambda-2$,
\item $\widetilde{\sigma}(s)=s$ for $s\in\{1,2,\dots,t\}\setminus\{i_1,\dots,i_\lambda,j_1,\dots,j_\lambda\}$.
\end{itemize}
Hence, we obtain $\mu_{i_\lambda\,j_\lambda}=0$. Therefore, (\ref{zeros2}) is valid.
\end{proof}

The following example illustrates the procedure described in the proof of the previous theorem.

\begin{example}
Suppose $a=11$, $b=10$ and $\mathbb F$ a finite field. In the matrices described below, $\times$ stands for a entry that is nonzero and $0$ for a entry that is zero. All the other entries may be zero or nonzero. Let
\[B=\left [ \begin{array}{cccccccccc}
\times & & & & & & & & 0 &  \\
& \times & & & & & & & 0 &  \\
 & & \times & & & & & & 0 & \times \\
 & & & \times & & & & & 0 & \\
 & &  & & \times & & & & 0 & \times \\
 & & & & & \times & & & 0 & \\
 & & & & & & \times & & 0 & \\
 & & & & & &  & \times & 0 & \\
 &  \times & & & &  \times & & & 0 & \\
\times & & & \times & &  & & & 0 & \\
\times & & & & & & & & \times & \times
\end{array} \right ] \in \mathbb F^{a \times b}
\]
be a superregular matrix and $u=[u_1,\dots u_{10}]^T$ such that $Bu=0$ with $u_i\neq 0$, for $1\leq i\leq 10$. So the columns of $B$ are linearly dependent. Suppose that $B_1$ is the submatrix of $B$ obtained by deleting the last row,
\[B_1=\left [ \begin{array}{cccccccccc}
\times & & & & & & & & 0 &  \\
& \times & & & & & & & 0 &  \\
 & & \times & & & & & & 0 & \times \\
 & & & \times & & & & & 0 & \\
 & &  & & \times & & & & 0 & \times \\
 & & & & & \times & & & 0 & \\
 & & & & & & \times & & 0 & \\
 & & & & & &  & \times & 0 & \\
 &  \times & & & &  \times & & & 0 & \\
\times & & & \times & &  & & & 0 &
\end{array} \right ].
\]
Since the next to last column is identically zero, all the other columns are linear dependent. So, we consider the matrices
\[\overline{B}=\left [ \begin{array}{ccccccccc}
\times & & & & & & & &   \\
& \times & & & & & & &   \\
 & & \times & & & & & &  \times \\
 & & & \times & & & & &  \\
 & &  & & \times & & & &  \times \\
 & & & & & \times & & &   \\
 & & & & & & \times & &   \\
 & & & & & &  & \times &  \\
 &  \times & & & &  \times & & &  \\
\times & & & \times & &  & & &
\end{array} \right ],\qquad\widehat{B}=\left [ \begin{array}{cccccccc|c}
\times & & & & & & & &  \\
& \times & & &  & & & &  \\
 & & \times & & & & & & \times \\
 & & & \times & & & & & \\
 & &  & & \times  & & & & \times \\
 & & & & & \times & &  & \\
 & & & & & & \times &  & \\
 & & & & & &  & \times  & \\ \hline
 &  \times & & & &  \times  & &  &
\end{array} \right ],
\]
where $\widehat{B}=[\mu_{ij}]$ is a square submatrix of $\bar B$ of order $m=9$, obtained form $\bar B$ by deleting its last row. Let us assume that $t = \mbox{rank} \, \widehat{B}=8$ and that $\widetilde{B}$ formed by the first $8$ rows and the first $8$ columns of $\widehat{B}$ is nonsingular. Since $\mid\widehat{B}\mid=0$ and $B$ is superregular, $\mid\widehat{B}\mid$ is a trivial minor, so using the permutation $\sigma(i)=i$, we get $\mu_{9\,9}=0$. With the permutations

\[\left (\begin{array}{ccccccccc}
1 & 2 & 3 & 4 & 5 & 6 & 7 & 8 & 9\\
1 & 9 & 3 & 4 & 5 & 6 & 7 & 8 & 2
\end{array}\right ),\; \left (\begin{array}{ccccccccc}
1 & 2 & 3 & 4 & 5 & 6 & 7 & 8 & 9\\
1 & 2 & 3 & 4 & 5 & 9 & 7 & 8 & 6
\end{array}\right )\]
and
\[\left (\begin{array}{ccccccccc}
1 & 2 & 3 & 4 & 5 & 6 & 7 & 8 & 9\\
1 & 2 & 9 & 4 & 5 & 6 & 7 & 8 & 3
\end{array}\right ),\; \left (\begin{array}{ccccccccc}
1 & 2 & 3 & 4 & 5 & 6 & 7 & 8 & 9\\
1 & 2 & 3 & 4 & 9 & 6 & 7 & 8 & 5
\end{array}\right )\]
we obtain $\mu_{2\,9}=\mu_{6\,9}=\mu_{9\,3}=\mu_{9\,5}=0$. Hence
\[ M=\widehat{B}=\left [ \begin{array}{cccccccc|c}
\times & & & & & & & &  \\
& \times & & &  & & & & 0  \\
 & & \times & & & & & & \times \\
 & & & \times & & & & & \\
 & &  & & \times  & & & & \times \\
 & & & & & \times & &  & 0 \\
 & & & & & & \times &  & \\
 & & & & & &  & \times  & \\ \hline
 &  \times & 0 & & 0 &  \times  & &  & 0
\end{array} \right ].
\]
Assume that all the other entries of the last row and all the other entries of the last column which are not represented in $M$ are zero. Then $D_1=\{2,6\}$ and $d_1=2$, $E_1=\{3,5\}$ and $e_1=2$ and $F_1=\{1,4,7,8\}$ and $f_1=4$. Now consider the pairs $(2,3),(2,5),(6,3)$ and $(6,5)$. The permutations $\widetilde{\sigma}$ defined by
\[\left (\begin{array}{ccccccccc}
1 & 2 & 3 & 4 & 5 & 6 & 7 & 8 & 9\\
1 & 3 & 9 & 4 & 5 & 6 & 7 & 8 & 2
\end{array}\right ),\; \left (\begin{array}{ccccccccc}
1 & 2 & 3 & 4 & 5 & 6 & 7 & 8 & 9\\
1 & 5 & 3 & 4 & 9 & 9 & 7 & 8 & 2
\end{array}\right )\]
and
\[\left (\begin{array}{ccccccccc}
1 & 2 & 3 & 4 & 5 & 6 & 7 & 8 & 9\\
1 & 2 & 9 & 4 & 5 & 3 & 7 & 8 & 6
\end{array}\right ),\; \left (\begin{array}{ccccccccc}
1 & 2 & 3 & 4 & 5 & 6 & 7 & 8 & 9\\
1 & 2 & 3 & 4 & 9 & 5 & 7 & 8 & 6
\end{array}\right )\]
enable us to conclude that $\mu_{2\,3}=\mu_{2\,5}=\mu_{6\,3}=\mu_{6\,5}=0$ (see (\ref{zeros2})). So
\[M=\left [ \begin{array}{cccccccc|c}
\times & & & & & & & & 0 \\
& \times & 0 & & 0 & & & & 0  \\
 & & \times & & & & & & \times \\
 & & & \times & & & & & 0 \\
 & &  & & \times  & & & & \times \\
 & & 0 & & 0 & \times & &  & 0 \\
 & & & & & & \times &  & 0 \\
 & & & & & &  & \times  & 0 \\ \hline
0 &  \times & 0 & 0 & 0 &  \times  & 0 & 0 & 0
\end{array} \right ].
\]
Suppose $\mu_{2\,1}\neq 0$, $\mu_{6,4}\neq 0$, $\mu_{7,3}\neq 0$ and $\mu_{8,5}\neq 0$, then $D_2=\{1,4\}$, $E_2=\{7,8\}$ and $F_2=\emptyset$. Also, $d_2=2$, $e_2=2$, $f_2=0$. Moreover, from (\ref{zerosR1}) and (\ref{zerosC1}), we have that $\mu_{ij}=0$ for $(i,j)\in\{(1,3),(1,5),(4,3),(4,5),(2,7),(6,7),(2,8),(6,8)\}$, and
\[M=\left [ \begin{array}{cccccccc|c}
\times & & 0 & & 0 & & & & 0 \\
\times & \times & 0 & & 0 & & 0 & 0 & 0  \\
 & & \times & & & & & & \times \\
 & & 0 & \times & 0 & & & & 0 \\
 & &  & & \times  & & & & \times \\
 & & 0 & \times & 0 & \times & 0 & 0 & 0 \\
 & & \times & & & & \times &  & 0 \\
 & & & & \times & &  & \times  & 0 \\ \hline
0 &  \times & 0 & 0 & 0 &  \times  & 0 & 0 & 0
\end{array} \right ].
\]

Now, we if we use the following permutations

\[\left (\begin{array}{ccccccccc}
1 & 2 & 3 & 4 & 5 & 6 & 7 & 8 & 9\\
7 & 1 & 9 & 4 & 5 & 6 & 3 & 8 & 2
\end{array}\right ),\;
\left (\begin{array}{ccccccccc}
1 & 2 & 3 & 4 & 5 & 6 & 7 & 8 & 9\\
8 & 1 & 3 & 4 & 9 & 6 & 7 & 5 & 2
\end{array}\right )\]
and
\[\left (\begin{array}{ccccccccc}
1 & 2 & 3 & 4 & 5 & 6 & 7 & 8 & 9\\
1 & 2 & 9 & 7 & 5 & 4 & 3 & 8 & 6
\end{array}\right ),\;
\left (\begin{array}{ccccccccc}
1 & 2 & 3 & 4 & 5 & 6 & 7 & 8 & 9\\
1 & 2 & 3 & 8 & 9 & 4 & 7 & 5 & 6
\end{array}\right ),\]
we obtain $\mu_{ij}=0$ for $(i,j)\in\{(1,7), (1,8), (4,7), (4,8)\}$.

Therefore,
\[M=\left [ \begin{array}{cccccccc|c}
\times & & 0 & & 0 & & 0 & 0 & 0 \\
\times & \times & 0 & & 0 & & 0 & 0 & 0  \\
 & & \times & & & & & & \times \\
 & & 0 & \times & 0 & & 0 & 0 & 0 \\
 & &  & & \times  & & & & \times \\
 & & 0 & \times & 0 & \times & 0 & 0 & 0 \\
 & & \times & & & & \times &  & 0 \\
 & & & & \times & &  & \times  & 0 \\ \hline
0 &  \times & 0 & 0 & 0 &  \times  & 0 & 0 & 0
\end{array} \right ].
\]

Before proceed, we will perform permutations on the rows and columns of $M$ so that the zeros are moved to the right bottom corner. By making first a permutation of the rows and then a permutation of the columns, we obtain
\[\mid M \mid=\pm\left | \begin{array}{cccccccc|c}
 & & \times & &  & & & & \times \\
 &  &  & & \times & &  &  & \times  \\
 & & \times &  &  & & \times & & 0 \\
 & &  & & \times  & & & \times & 0 \\
\times & & 0 &  & 0 &  & 0 & 0 & 0 \\
 & & 0 & \times & 0 & & 0 & 0 & 0 \\
\times & \times & 0 & & 0 & & 0 & 0 & 0 \\
 & & 0 & \times & 0 & \times & 0 & 0  & 0 \\ \hline
0 &  \times & 0 & 0 & 0 &  \times  & 0 & 0 & 0
\end{array} \right |\qquad =\quad \pm\left | \begin{array}{cccccccc|c}
 & &  & &  & & & \times & \times \\
 &  &  & &  & & \times &  & \times  \\
 & &  &  &  & \times &  & \times & 0 \\
 & &  & & \times  & &  \times &  & 0 \\
 & & \times &  & 0 & 0 & 0  & 0 & 0 \\
 & &  & \times & 0 & 0 & 0  & 0 & 0 \\
\times & & \times & & 0 & 0  & 0 & 0 & 0 \\
 & \times &  & \times & 0 & 0 & 0 & 0  & 0 \\ \hline
\times &  \times & 0 & 0 & 0 &  0  & 0 & 0 & 0
\end{array} \right |.
\]

Since $m-t > f_2$, we consider $B_2$ equal to the matrix formed by the rows $5, 6, 7$ and $8$, and the columns $1, 2, 3$ and $4$ of the last matrix, i. e.
\[B_2=\left [\begin{array}{cccc}
& & \times &  \\
 & &  & \times \\
 \times & & \times \\
  & \times &  & \times
 \end{array} \right ].
\]
With $\widetilde{u}$ appropriately chosen we have $B_2\widetilde{u}=0$ and so $\mid B_2\mid =0$. But the term corresponding to the permutation $\sigma(1)=3$, $\sigma(2)=4$, $\sigma(3)=1$ and $\sigma(4)=2$ is nontrivial. Hence we have one nontrivial minor equal to zero, contradicting the hypothesis that $B$ is superregular.

Therefore, \[\mathrm{wt}(Bu)\geq 11-10+1\geq 2.\]

\end{example}

The next theorem states that matrices over $\mathbb{F}$ of a certain form are superregular. Similar matrices were defined in \cite{ANP2013}.

\begin{theorem}\label{diagonal} Let $\alpha$ be a primitive element of a finite field $\mathbb{F}=\mathbb{F}_{p^N}$ and $B=[\nu_{i\,\ell}]$ be a matrix over $\mathbb{F}$ with the following properties
\begin{enumerate}
\item if $\nu_{i\,\ell}\neq 0$ then $\nu_{i\,\ell}=\alpha^{\beta_{i\,\ell}}$ for a positive integer $\beta_{i\,\ell}$;
\item If $\nu_{i\,\ell}=0$ then $\nu_{i'\,\ell}=0$, for any $i'>i$ or  $\nu_{i\,\ell'}=0$, for any $\ell'<\ell$;
\item if $\ell<\ell'$, $\nu_{i\ell}\neq 0$ and $\nu_{i\ell'}\neq 0$ then $2\beta_{i\,\ell}\leq\beta_{i\,\ell'}$;
\item if $i<i'$, $\nu_{i\,\ell}\neq 0$ and $\nu_{i'\,\ell}\neq 0$ then $2\beta_{i\,\ell}\leq\beta_{i'\,\ell}$.
\end{enumerate}
Suppose $N$ is greater than any exponent of $\alpha$ appearing as a nontrivial term of any minor of $B$. Then $B$ is superregular.
\end{theorem}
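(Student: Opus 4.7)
The idea is to expand any nontrivial minor of $B$ by the Leibniz formula, exhibit a uniquely dominating power of $\alpha$ in the expansion, and then invoke $\mathbb{F}_p$-linear independence of $1,\alpha,\dots,\alpha^{N-1}$ in $\mathbb{F}=\mathbb{F}_{p^N}$ to conclude the minor is nonzero.

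Fix an $m\times m$ submatrix $A=[\nu_{i_r,\ell_s}]$ of $B$ with $i_1<\cdots<i_m$ and $\ell_1<\cdots<\ell_m$ whose minor is nontrivial, and write
\[
\det A \;=\; \sum_{\sigma}\mathrm{sgn}(\sigma)\,\alpha^{S(\sigma)},\qquad S(\sigma):=\sum_{r=1}^{m}\beta_{i_r,\ell_{\sigma(r)}},
\]
where the sum ranges over the nontrivial permutations $\sigma\in S_m$. The central step is the following transposition lemma: if $r<s$, $\sigma(r)=b>a=\sigma(s)$, and both $\sigma$ and $\sigma\cdot(r,s)$ are nontrivial, then $S(\sigma\cdot(r,s))>S(\sigma)$. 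Indeed, condition~(3) applied to row $i_s$ gives $\beta_{i_s,\ell_b}\ge 2\beta_{i_s,\ell_a}$ and condition~(4) applied to column $\ell_b$ gives $\beta_{i_s,\ell_b}\ge 2\beta_{i_r,\ell_b}$; since $\max(2x,2y)\ge x+y$, we obtain $\beta_{i_s,\ell_b}\ge \beta_{i_r,\ell_b}+\beta_{i_s,\ell_a}$, and therefore
\[
S(\sigma\cdot(r,s))-S(\sigma)\;=\;\beta_{i_r,\ell_a}+\beta_{i_s,\ell_b}-\beta_{i_r,\ell_b}-\beta_{i_s,\ell_a}\;\ge\;\beta_{i_r,\ell_a}\;>\;0.
\]

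Next I would combine this lemma with the staircase zero structure enforced by condition~(2) to show that the maximum of $S$ over nontrivial permutations is attained by a \emph{unique} $\sigma^{*}$. The argument is a constrained bubble sort: starting from an arbitrary nontrivial $\sigma$, un-inverting an inversion strictly increases $S$ whenever the swapped permutation is still nontrivial, and condition~(2) is precisely what ensures that any would-be forbidden swap (where $\nu_{i_r,\ell_a}=0$ or $\nu_{i_s,\ell_b}=0$) forces an entire row or column in the surrounding rectangle to vanish in $B$, which in turn constrains the admissible $\sigma$'s and leaves only one compatible swap that remains nontrivial. Iterating drives every nontrivial $\sigma$ to the same canonical maximizer $\sigma^{*}$, the $S$-value increasing strictly at each step. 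Carrying out this bookkeeping cleanly, so that every intermediate permutation is certified nontrivial, is the main obstacle of the proof.

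Once uniqueness of the $S$-maximum is established, the conclusion is immediate. Every nontrivial term contributes $\pm\alpha^{k}$ with $k\le S(\sigma^{*})<N$, so grouping by exponent yields $\det A=\sum_{k=0}^{N-1}c_k\alpha^{k}$ with $c_k\in\mathbb{F}_p$ (sums of signs), and $c_{S(\sigma^{*})}=\mathrm{sgn}(\sigma^{*})=\pm 1\ne 0$. Since $\alpha$ is a primitive element of $\mathbb{F}_{p^N}$, its minimal polynomial over $\mathbb{F}_p$ has degree $N$, so $\{1,\alpha,\dots,\alpha^{N-1}\}$ is an $\mathbb{F}_p$-basis of $\mathbb{F}$; therefore this $\mathbb{F}_p$-linear combination is nonzero, i.e.\ $\det A\ne 0$. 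As the chosen minor was an arbitrary nontrivial minor of $B$, this proves that $B$ is superregular.
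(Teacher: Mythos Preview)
Your transposition lemma is correct and is indeed the engine behind the result, but you are aiming at the wrong extremum: the \emph{maximum} of $S$ over nontrivial permutations need not be unique, so the coefficient $c_{S(\sigma^*)}$ can vanish in $\mathbb{F}_p$. Here is a concrete counterexample. Take the $3\times 3$ matrix $B$ with $\nu_{1,1}=\nu_{3,3}=0$ and $\nu_{i,\ell}=\alpha^{2^{i+\ell-2}}$ otherwise. Condition~(2) is satisfied (for $(1,1)$ the clause ``$\ell'<\ell$'' is vacuous, and for $(3,3)$ the clause ``$i'>i$'' is vacuous), and conditions~(3),~(4) hold since $2\cdot 2^{i+\ell-2}=2^{i+\ell-1}\le 2^{i+\ell'-2}$ whenever $\ell'>\ell$. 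The nontrivial permutations are exactly $(13)$, $(123)$, $(132)$, with
\[
S((13))=4+4+4=12,\qquad S((123))=2+8+4=14,\qquad S((132))=4+2+8=14.
\]
Both $3$-cycles are local maxima for your bubble-sort order (each un-inverting swap hits either the $(1,1)$ or the $(3,3)$ zero), and they have the same $S$-value and the same sign $+1$. Hence for $p=2$ their contributions cancel and $c_{14}=0$; your last paragraph then proves nothing. The determinant is still nonzero here, but only because the \emph{minimum} exponent $12$ is attained uniquely.

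This is exactly how the paper proceeds: it shows that the minimum of $S$ is attained uniquely. The extra idea you are missing is a preliminary \emph{column reordering} of the submatrix. One greedily chooses, row by row from the bottom, the leftmost still-available column with a nonzero entry; property~(2) guarantees this process succeeds and produces a permuted matrix $A$ whose antidiagonal term is nontrivial. After this reordering conditions~(3) and~(4) survive only on and below the antidiagonal, but that is enough: one shows directly (by partitioning the index set according to which antidiagonal entries a given $\sigma$ ``dominates'') that $\beta_{\hat\sigma}<\beta_\sigma$ for every other nontrivial $\sigma$, whence the coefficient of $\alpha^{\beta_{\hat\sigma}}$ in $\det A$ is $\pm 1$. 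Your transposition inequality can be reused here, but pointed the other way and combined with the reordering step; without that step the uniqueness claim simply fails.
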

\begin{proof}
Let $C=[c_{a\,b}]$ be a square submatrix of $B$ of order $m$ such that $\mid C\mid $ is a nontrivial minor. We are going to prove that $\mid C\mid \neq 0$.

Let $C_1, \dots, C_m$ be the columns of $C$. Firstly, we will define, recursively, a sequence of integers $i_1, i_2, \dots, i_m$, such that the antidiagonal term of the minor $\left |C_{i_1}\, C_{i_2}\, \dots\, C_{i_m}\right |$ is nontrivial.

Since $\mid C\mid $ has a nontrivial term, the last row of $C$ must have a nonzero entry. Define $i_1=\min \{i\,|\, c_{m\,i}\neq 0\}$. Given $j\in\{2,3,\dots,m-1\}$, suppose $i_1, i_2, \dots, i_{j-1}$ are well defined and take the set
\[I_j=\{i\,|\, c_{m-j+1\,i}\neq 0\,\mbox{and}\, i\notin\{i_k\,|\,k<j\}\}.\]

Suppose that $I_j=\emptyset$ then $c_{m-j+1\,i}=0$ for any $i\notin\{i_k\,|\,k<j\}$. Let $\sigma\in S_m$ be a permutation such that $c_\sigma$ is a nontrivial term of $\mid C\mid$. Clearly, $\sigma(m-j+1)=i_{k_1}$, for some $k_1\in\{1,2, \dots,j-1\}$. Let $\ell_1=\sigma(m-k_1+1)$. Then $\ell_1\neq i_{k_1}$. Suppose $\ell_1> i_{k_1}$. If $c_{m-j+1\,\ell_1}=0$ then, by propriety $2.$, $c_{m-j+1\, i_{k_1}}=0$ or $c_{m-k_1+1\,\ell_1}=0$ contradicting the fact that $c_\sigma$ is a nontrivial term. Therefore, $c_{m-j+1\,\ell_1}\neq 0$ and so $\ell_1\in\{i_k\,|\,k<j\}\setminus\{i_{k_1}\}$. If $\ell_1< i_{k_1}$ then, by definition of $i_{k_1}$, $\ell_1\in\{i_k\,|\,k<j\}\setminus\{i_{k_1}\}$. Now, for $r\in\{2, 3, \dots, j-1\}$, and using a similar reasoning, we may take $k_r$ such that $i_{k_r}=\ell_{r-1}$ and $\ell_r=\sigma(m-k_r+1)$. But then
 \[\ell_{j-1}\in\{i_k\,|\,k<j\}\setminus\{i_{k_1},i_{k_2},\dots, i_{k_{j-1}}\}=\emptyset,\]
 which is impossible. Hence $I_j\neq\emptyset$, and so we may define
 \[i_j=\min\{i\,|\, c_{m-j+1\,i}\neq 0\,\mbox{and}\, i\notin\{i_k\,|\,k<j\}\}.\]
 Thus, the integers $i_1, i_2, i_m$ are well defined. Notice that if the antidiagonal term of $\mid C\mid $ is nontrivial then, clearly, $i_j=j$, for $j\in\{1,3,\dots,m\}$.

Now, define $A=\left [C_{i_1}\, C_{i_2}\, \dots\, C_{i_m}\right ]=[\mu_{i\,\ell}]$. Clearly, the matrix $A$ satisfies propriety $1.$ but also the following proprieties
\begin{description}
\item[(i)] if $\hat{\sigma}\in S_m$ is the permutation defined by $\hat{\sigma}(i)=m-i+1$, then $\mu_{\hat{\sigma}}$ is a nontrivial term of $\mid A\mid $.
\item[(ii)] if $\ell\geq m-i+1$, $\ell<\ell'$, $\mu_{i\ell}\neq 0$ and $\mu_{i\ell'}\neq 0$ then $2\beta_{i\ell}\leq\beta_{i\ell'}$;
\item[(iii)] if $\ell\geq m-i+1$, $i<i'$, $\mu_{i\ell}\neq 0$ and $\mu_{i'\ell}\neq 0$ then $2\beta_{i\ell}\leq\beta_{i'\ell}$.
\end{description}

Let $\sigma\in S_m$ such that $\mu_{\sigma}$ is a nontrivial term of $\mid A\mid $. By property $1.$, we have $\mu_\sigma=\alpha^{\beta_\sigma}$, for a positive integer $\beta_\sigma$.

Let $T_m=\{\sigma\in S_m\; |\; \sigma\neq \hat{\sigma}\;\mbox{and}\; \mu_\sigma\; \mbox{is a nontrivial term of} \;\mid A\mid \}$. If $T_m=\emptyset$ then
$\mid A\mid  =\mu_{\hat{\sigma}}=\alpha^{\beta_{\hat{\sigma}}} \neq 0$.

If $T_m\neq\emptyset$, let $\sigma\in T_m$. We are going to prove that $\beta_{\hat{\sigma}}<\beta_\sigma$. Since $\mu_\sigma$ in a nontrivial term of $\mid A\mid $, for any $1\leq i\leq m$, there exists $\ell\geq i$ such that $\sigma(\ell)\geq m-i+1$. For any $1\leq \ell\leq m$ define

\[S_\ell=\{i\; |\; i\leq \ell\; \mbox{and}\; \sigma(\ell)\geq m-i+1\}.\]

Notice that $\displaystyle{\cup_{1\leq j\leq m}S_\ell=\{1,2, \dots, m\}}$ and, since $\sigma\neq \hat{\sigma}$, there exists at least one $\ell_0$, such that $1\leq \ell_0\leq m$ and $S_{\ell_0}=\emptyset$. By properties (ii) and (iii), we have that if $S_\ell\neq\emptyset$,
\[\sum_{i\in S_\ell}\beta_{i\; m-i+1} \leq \beta_{\ell\;\sigma(\ell)}.\]

Therefore
\[\beta_{\hat{\sigma}} = \sum_{i=1}^m\beta_{i\; m-i+1} \leq \sum_{\stackrel{\ell=1}{S_\ell\neq\emptyset}}^m\beta_{\ell\;\sigma(\ell)} < \sum_{\ell=1}^m \beta_{\ell\;\sigma(\ell)}.\]

So
\[\mid A\mid =\alpha^{\beta_{\hat{\sigma}}} +\sum_{h=\beta_{\hat{\sigma}}+1}^{N-1}\epsilon_h\alpha^h,\]
 where $\epsilon_h\in \{0,1,\dots, p-1\}$. Hence $\mid A\mid \neq 0$ and so $\mid C\mid \neq 0$.

 Thus $B$ is superregular.
\end{proof}

The following examples illustrates the procedure described in the proof of the previous theorem.

\begin{example}
Let $E=[e_{ij}]$ be the matrix
\[\left[\begin{array}{cccccc}
\emptyset         & \emptyset & 2 & 3 & 4 & 5\\
0 & 1 & 3 & 4 & 5 & 6\\
1         & 2 & 4 & 5 & 6 & 7\\
2         & \emptyset & 5 & 6 & 7 & 8\\
\emptyset         & \emptyset & 6 & 7 & \emptyset & 9\\
\emptyset         & \emptyset & 7 & 8 & \emptyset & \emptyset
\end{array}
\right]\]
and $C=[c_{ij}]$ be the $6 \times 6$ matrix defined by
 \[c_{ij}=
 \left\{
 \begin{array}{ll}
 0 & \mbox{if} \; e_{ij}= \emptyset \\
 \alpha^{2^{e_{ij}}} & \mbox{elsewhere}
 \end{array}
 \right. .\]
In this case, $i_1=3$, $i_2=4$, $i_3=1$, $i_4=2$, $i_5=5$ and $i_6=6$. Therefore, the matrix $F=[f_{ij}]=[E_3\, E_4\, E_1\, E_2\, E_5\, E_6]$, where $E_i$ represents the $i$-$th$ column of $E$, is
\[\left[\begin{array}{cccccc}
2 & 3 & \emptyset         & \emptyset & 4 & 5\\
3 & 4 & 0 & 1 & 5 & 6\\
4 & 5 & 1         & 2 & 6 & 7\\
5 & 6 & 2         & \emptyset & 7 & 8\\
6 & 7 & \emptyset         & \emptyset & \emptyset & 9\\
7 & 8 & \emptyset         & \emptyset & \emptyset & \emptyset
\end{array}
\right]\]
and therefore $A=[a_{ij}]$, the $6 \times 6$ matrix, defined by
 \[a_{ij}=
 \left\{
 \begin{array}{ll}
 0 & \mbox{if} \; f_{ij}= \emptyset \\
 \alpha^{2^{f_{ij}}} & \mbox{elsewhere}
 \end{array}
 \right. \]
satisfies proprieties (i), (ii) and (iii).
\end{example}

\begin{example}
Consider the matrix
\[A=
\left[\begin{array}{ccccccc}
0 & 0 & 0 & \alpha^{2^3} & \alpha^{2^4} & \alpha^{2^{12}} & \alpha^{2^{13}}\\
0 & 0 & 0 & \alpha^{2^6} & \alpha^{2^7} & \alpha^{2^{15}} & \alpha^{2^{16}}\\
0 &\alpha^{2^0} & \alpha^{2^1} & \alpha^{2^9} & \alpha^{2^{10}} &\alpha^{2^{18}} & 0 \\
0 &\alpha^{2^3} & \alpha^{2^4} & \alpha^{2^{12}} & \alpha^{2^{13}} &\alpha^{2^{21}} & 0 \\
0 &\alpha^{2^6} & \alpha^{2^7} & \alpha^{2^{15}} & \alpha^{2^{16}} &\alpha^{2^{24}} & 0 \\
\alpha^{2^1} & \alpha^{2^9} & \alpha^{2^{10}} &\alpha^{2^{18}} & 0 & 0 & 0 \\
\alpha^{2^4} & \alpha^{2^{12}} & \alpha^{2^{13}} &\alpha^{2^{21}} & 0 & 0 & 0
\end{array}\right].\]
Let $\hat{\sigma}\in S_7$ be the permutation defined by $\hat{\sigma}(i)=8-i$ and let
\[\sigma=\left(\begin{array}{ccccccc}
1 & 2 & 3 & 4 & 5 & 6 & 7\\
6 & 7 & 5 & 3 & 2 & 4 & 1
\end{array}\right).\]
Clearly $\mu_{\sigma}$ is a nontrivial term of $\mid A\mid $. The next table shows the sets $S_\ell$, for $1\leq\ell\leq m$.

\vspace{7mm}
\begin{center}
\begin{tabular}{|c|c|c|c|c|c|c|c|}
\hline
$\ell\quad$ & $1$    & $2$    & $3$    & $4$    & $5$    & $6$    & $7$   \\ \hline
$S_\ell$  & $\emptyset\qquad$ &  $\{1,2\}$  & $\{3\}\quad$ & $\emptyset\qquad$   & $\emptyset\qquad$   & $\{4,5,6\}$   & $\{7\}\quad$\\ \hline
\end{tabular}
\end{center}
\vspace{7mm}
Now,
\begin{eqnarray*}
\sum_{i\in S_2}\beta_{i\; m-i+1} & = & \beta_{1\; 7}+ \beta_{2\; 6}\\
& = & 2^{13}+2^{15}\\
& < & 2^{16}\\
& = & \beta_{2\;\sigma(2)},
\end{eqnarray*}
for $\ell=3$ and $\ell=7$ we have
\begin{eqnarray*}
\sum_{i\in S_3}\beta_{i\; m-i+1} & =  \beta_{3\; 5} & = \beta_{3\;\sigma(3)}\\
\sum_{i\in S_7}\beta_{i\; m-i+1} & =  \beta_{7\; 1} & = \beta_{7\;\sigma(7)}
\end{eqnarray*}
and for $\ell=6$ we have
\begin{eqnarray*}
\sum_{i\in S_6}\beta_{i\; m-i+1} & = & \beta_{4\; 4}+ \beta_{5\; 3}+\beta_{6\; 2}\\
& = & 2^{12}+2^{7}+2^{9}\\
& < & 2^{18}\\
& = & \beta_{6\;\sigma(6)}.
\end{eqnarray*}
So
\[\sum_{i=1}^m\beta_{i\; m-i+1}<\sum_{\ell=1}^m \beta_{\ell\;\sigma(\ell)}.\]
Note that any nontrivial term of $A$ has determinant smaller than $\alpha^{2^{25}}$. Then, if $N\geq 2^{25}$ we have $\mid A\mid \neq 0$.
\end{example}

\section{Constructions of optimal convolutional codes}\label{sec:construction}



Let $\cal C$ be a convolutional code of rate $k/n$ and different Forney $\nu_1 <  \dots < \nu_{\ell}$ with corresponding multiplicities $m_1, \dots, m_{\ell}$ and
\[G(z)= \sum_{i=0}^{\nu_\ell} G_i z^i\]
a column reduced encoder of $\cal C$ with column degrees in nondecreasing order. Consider a nonzero codeword $v(z)=G(z)u(z)$ with $u(z)\in \mathbb F[z]^k$. Writing
\[ u(z)=\sum_{i=0}^\epsilon u_i z^i\quad\mbox{and}\quad v(z)=\sum_{i=0}^{\nu_\ell+\epsilon} v_i z^i,\]
we have

\[
\left[\begin{array}{c}
v_0\\
v_1\\
v_2\\
\vdots\\
v_{\nu_\ell+\epsilon}
\end{array}
\right] ={\cal G}(\epsilon)\quad \left[\begin{array}{c}
u_\epsilon\\
\vdots \\
u_1\\
u_0
\end{array}
\right]
\]

where
\begin{equation} \label{Gd}
{\cal G}(\epsilon)=\left[
\begin{array}{cccccc}
0 & 0 & \cdots & 0 & 0 & G_0 \\
0 & 0 & \cdots & 0 & G_0 & G_1 \\
\vdots & \vdots & \ddots & \vdots & \vdots & \vdots \\
0 & 0 & \cdots & G_{\nu_\ell-2} & G_{\nu_\ell-1} & G_{\nu_\ell}\\
0 & 0 & \cdots & G_{\nu_\ell-1} & G_{\nu_\ell} &  0 \\
0 & 0 & \cdots & G_{\nu_\ell} & 0 &  0 \\
\vdots & \vdots & \ddots &\vdots & \vdots & \vdots \\
 G_0 & G_1 & \cdots & 0 & 0 & 0 \\
 G_1 & G_2 & \cdots & 0 & 0 & 0 \\
\vdots & \vdots & \ddots &\vdots & \vdots & \vdots \\
G_{\nu_\ell-1} & G_{\nu_\ell}  & \cdots & 0 & 0 & 0 \\
G_{\nu_\ell} &  0 & \cdots & 0 & 0 & 0
 \end{array}
\right]\in \mathbb F^{n(\nu_\ell+\epsilon+1)\times k(\epsilon+1)}.
\end{equation}

We will prove that if $G(z)$ is such that the matrices ${\cal G}(\epsilon)$ defined in (\ref{Gd}) are superregular, for certain values of $\epsilon$,  then $\cal C$ is an optimal $(n,k,\nu_1,m_1)$ convolutional code.

\begin{theorem}\label{MDSCC}
Let $G(z)= \sum_{i \geq 0} G_i z^i \in \mathbb F[z]^{n \times k}$ be a matrix with column degrees $\nu_1<\cdots<\nu_\ell$ with multiplicities $m_1,\dots,m_\ell$, respectively, and such that all entries of the last $m_j + \cdots + m_{\ell}$  columns of $G_i$ are nonzero for $i \leq \nu_j$, $j=1, \dots, \ell$. Suppose that ${\cal G}(\epsilon_0)$, defined in (\ref{Gd}), is superregular for
\begin{equation}\label{epsilon0}
\epsilon_0=\left\lceil\frac{n(\nu_1+1)-m_1}{n-k}\right\rceil-1.
\end{equation}
Then $G(z)$ is column reduced and ${\cal C} = \mbox{Im}_{\mathbb F[z]}G(z)$ is an optimal $(n,k,\nu_1,m_1)$ convolutional code.

\end{theorem}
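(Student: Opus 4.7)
Both conclusions of the theorem are derived from Theorem~\ref{SRZeros} applied to $\mathcal{G}(\epsilon_0)$. Throughout, let $j(q)\in\{1,\ldots,\ell\}$ be the index such that column $q$ of $G$ has degree $\nu_{j(q)}$. The hypothesis fixes the zero pattern of $\mathcal{G}(\epsilon_0)$: the $q$-th column of each $G_i$ is nonzero in every row for $i\leq \nu_{j(q)}$ and zero for $i>\nu_{j(q)}$.

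\textbf{Step 1: column reducedness.} I exhibit a $k\times k$ submatrix of $\mathcal{G}(\epsilon_0)$ whose determinant is a nontrivial minor and which coincides with a $k\times k$ submatrix of $G^{hc}$. For each column $q$ of $G$, pick the column of $\mathcal{G}(\epsilon_0)$ in column block $s_q=\nu_\ell-\nu_{j(q)}$ with internal index $q$, and a within-block row $r_q\in\{1,\ldots,n\}$ inside the row block $\nu_\ell$, the $r_q$'s being chosen pairwise distinct. This places every leading coefficient in the same row block $\nu_\ell$, so the $(q,q')$-entry of the resulting submatrix equals $(G_{\nu_{j(q')}})_{r_q,q'}=G^{hc}_{r_q,q'}$. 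The identity permutation contributes $\prod_q G^{hc}_{r_q,q}\neq 0$ by hypothesis, so the minor is nontrivial and therefore nonzero by superregularity of $\mathcal{G}(\epsilon_0)$; this forces $G^{hc}$ to have full column rank, i.e., $G(z)$ to be column reduced. (In the regime $\nu_\ell-\nu_1>\epsilon_0$ the choice $s_q=\nu_\ell-\nu_{j(q)}$ lies outside the range of $\mathcal{G}(\epsilon_0)$; one then replaces the single row block $\nu_\ell$ by row blocks $\nu_{j(q)}$ chosen per column, obtaining a block upper-triangular submatrix whose diagonal blocks are the analogous group-wise subblocks of $G^{hc}$.)

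\textbf{Step 2: reduction to Theorem~\ref{SRZeros}.} Let $v(z)=G(z)u(z)$ be a nonzero codeword. The weight of $v$ is invariant under multiplication of $u(z)$ by powers of $z$, so one may assume $u_0\neq 0$. From $v_t=\sum_{i+j=t}G_iu_j$, the first $\epsilon_0+1$ coefficient blocks $v_0,\ldots,v_{\epsilon_0}$ depend only on $u_0,\ldots,u_{\epsilon_0}$ and equal $T\tilde u$, where $T$ is the top $n(\epsilon_0+1)\times k(\epsilon_0+1)$ submatrix of $\mathcal{G}(\epsilon_0)$ and $\tilde u=(u_{\epsilon_0},\ldots,u_0)^{\top}$ (zero-padded if $\deg u<\epsilon_0$). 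Let $b$ be the number of nonzero entries of $\tilde u$; restrict $T$ to those columns and delete its identically zero rows to obtain $B'\in\F^{a\times b}$. Because $G_0$ is fully nonzero, every row of $T$ has a nonzero entry in the rightmost column block, so $B'$ has no zero row; being a submatrix of $\mathcal{G}(\epsilon_0)$, $B'$ is superregular. Theorem~\ref{SRZeros} then yields
\[
\mathrm{wt}(v)\;\geq\;\mathrm{wt}(T\tilde u)\;=\;\mathrm{wt}(B'\,\tilde u|_{\neq 0})\;\geq\;a-b+1,
\]
reducing the bound on $\mathrm{dist}(\mathcal{C})$ to the combinatorial inequality $a-b\geq n(\nu_1+1)-m_1$.

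\textbf{Step 3: combinatorial inequality (expected main obstacle).} A selected column of $T$ at $(s,q)$ is fully nonzero in every row of the row blocks $t\in[s,\min(s+\nu_{j(q)},\epsilon_0)]$, so
\[
a \;=\; n\cdot\bigl|\,\bigcup_i[s_i,\min(s_i+\nu_{j(q_i)},\epsilon_0)]\,\bigr|.
\]
Since $u_0\neq 0$ at least one $s_i=0$, and its reach alone contributes at least $n(\nu_1+1)$ to $a$. At most $m_1$ of the selected columns can correspond to $s=0$ with degree $\nu_1$ (since exactly $m_1$ columns of $G$ have that degree); every further nonzero entry of $\tilde u$ either lies inside an already covered row block, absorbed by the ``budget'' of $m_1$, or enlarges the union of reach intervals by at least one block, contributing $n$ to $a$ per unit added to $b$. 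The relation $(n-k)(\epsilon_0+1)\geq n(\nu_1+1)-m_1$ provided by \eqref{epsilon0} supplies exactly the combinatorial budget needed for the accounting to close. The main obstacle of the plan is precisely this counting: tracking the overlap of reach intervals and their clipping at $\epsilon_0$, and handling the case in which the smallest-degree nonzero entry of $u_0$ has degree strictly greater than $\nu_1$, so the baseline reach exceeds $n(\nu_1+1)$ but the saving $m_1$ has to be recovered from the higher-multiplicity groups.
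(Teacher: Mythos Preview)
Your overall architecture matches the paper's: reduce everything to Theorem~\ref{SRZeros} applied to a suitable submatrix of $\mathcal{G}(\epsilon_0)$, then close with a combinatorial count. Your Step~2 is in fact the paper's treatment of the case $\deg u>\epsilon_0$ applied uniformly (the paper splits into $\epsilon\le\epsilon_0$ and $\epsilon>\epsilon_0$, but your truncation-to-$\epsilon_0$ subsumes both), and your Step~1 is the paper's argument that $G^{hc}$ sits inside $\mathcal G(\nu_\ell-\nu_1)$, with the same unresolved edge case $\nu_\ell-\nu_1>\epsilon_0$ that the paper also glosses over.

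The real divergence is Step~3. Your ``budget of $m_1$'' heuristic is not correct as stated: it is easy to have more than $m_1-1$ selected columns whose reach is entirely absorbed by the current union. For instance, with $n=3$, $k=2$, $\nu_1=0$, $m_1=1$, $\nu_2=1$, $\epsilon_0=1$, selecting all four columns gives $|U|=2$, $a=6$, $b=4$; starting from $(0,1)$ one sees two absorbed columns against a ``budget'' of $1$, yet $a-b=2=n(\nu_1+1)-m_1$ still holds. The point is that the bound is not controlled by counting absorbed columns one by one, but by bounding, for each fixed size of the covered set, how many reach intervals can fit inside it. The paper does exactly this: it fixes the number $c$ of initial nonzero row blocks (your $|U|$ restricted to the first connected component), observes that a column of degree $\nu_i$ whose reach lies in $[0,c-1]$ must have $s\le c-1-\nu_i$, so at most $\sum_i m_i(c-\nu_i)_+$ columns fit, and then checks directly that $nc-\sum_i m_i(c-\nu_i)_+\ge n(\nu_1+1)-m_1$ by the elementary estimate $\sum_i m_i(c-\nu_i)_+\le m_1+(c-\nu_1-1)k$. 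Equivalently, one can decompose your union $\bigcup R_i$ into connected components and run the same count component-wise; the component containing $0$ has length $\ge\nu_1+1$ and already yields the required $n(\nu_1+1)-m_1$, while every other component contributes nonnegatively. Either route replaces your greedy heuristic and is what you need to make Step~3 rigorous.
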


\begin{proof}
Clearly $k=m_1+\cdots+m_\ell$ and ${\cal G}(\epsilon)$ is superregular for any $\epsilon\leq\epsilon_0$. To prove that ${\cal C}$ is optimal we have to show that all nonzero codewords of ${\cal C}$, $v(z)$, have weight greater or equal than
 \[n(\nu_1 + 1) - m_1 +1.\]

Let $j\in\{1,2,\dots, \ell-1\}$. Note that if $i>\nu_j$, the first $m_1+\cdots+m_j$ columns of $G_i$ are zero and all the entries of the other columns of $G_i$ are nonzero.

Let $v(z)$ be a nonzero codeword of ${\cal C}$ and $u(z) \in \mathbb{F}[z]^k$ such that $v(z)=G(z)u(z)$. It is obvious that $u(z) \neq 0$.

\vspace{7mm}
Let us assume that $\epsilon\leq\epsilon_0$.

Suppose that the weight of $u(z)$ is $t$ and that deg $(u) \leq \epsilon$. Let $B$ be the matrix formed by the $t$ columns of ${\cal G}(\epsilon)$ that are multiplied by the nonzero entries of $u(z)$ to obtain $v(z)$. Next, we are going to calculate a lower bound for the number of rows of $B$ with nonzero entries depending on $t$.

If $(a-1)m_1<t\leq am_1$, for some $1\leq a\leq\nu_2-\nu_1$, then $B$ has at least $n(\nu_1+1+a-1)$ rows with nonzero entries. Since $B$ is superregular, using theorem \ref{SRZeros}, we obtain
\begin{eqnarray*}
\mathrm{wt}(v(z)) & \geq & n(\nu_1+a)-t+1\\
& = & n(\nu_1+1)+n(a-1)-t+1\\
& \geq & n(\nu_1+1)-m_1+1
\end{eqnarray*}
since $t-n(a-1) \leq t-m_1(a-1)\leq m_1$.

Let $b\in\{2, \dots, \ell\}$. For any $a$ such that
\[\left\{\begin{array}{ll}
1\leq a\leq\nu_{b+1}-\nu_b & \quad\mbox{if}\quad b< \ell.\\
1\leq a\leq\epsilon+ 1 & \quad\mbox{if}\quad b=\ell.
\end{array}\right.\]
and for $1\leq i\leq b$, define $\lambda_i=\min\{a-1+\nu_b-\nu_i,\epsilon+1\}$ and $\gamma_i=\min\{a+\nu_b-\nu_i,\epsilon+1\}$. Suppose that
\begin{equation}\label{columns}
\sum_{i=1}^b \lambda_im_i<t\leq\sum_{i=1}^b \gamma_im_i,
\end{equation}
then $B$ has at least $\,n(\nu_b+1+a-1)$ rows with nonzero entries. Again, using theorem \ref{SRZeros}, we obtain
\[\mathrm{wt}(v(z)) \geq  n(\nu_b+a) -t + 1.\]
On the other hand, we have that
\begin{eqnarray*}
t & \leq & \displaystyle{\sum_{i=1}^b(a+\nu_b-\nu_i)m_i} \\
& \leq & (a+\nu_b-\nu_1-1)m_1+m_1+\displaystyle{\sum_{i=2}^b(a+\nu_b-(\nu_1+1))m_i} \\
& \leq & m_1+n(a+\nu_b-(\nu_1+1))
\end{eqnarray*}
since $n>k\geq m_1+\cdots +m_b$. Hence, $\mathrm{wt}(v(z))\geq n(\nu_1+1)-m_1+1$ for every nonzero codeword.

\vspace{7mm}
Next suppose $\epsilon>\epsilon_0$. Let
\[v^{\epsilon_0}(z)=\sum_{i=0}^{\epsilon_0} v_i z^i\quad\mbox{and}\quad u^{\epsilon_0}(z)=\sum_{i=0}^{\epsilon_0} u_i z^i.\]
Note that the submatrix formed by the first $(\epsilon_0+1)n$ rows and the first $k(\epsilon-\epsilon_0)$ columns of ${\cal G}(\epsilon)$ is null.
Let $A$ be the matrix formed by the first $(\epsilon_0+1)n$ rows and the last $k(\epsilon_0+1)$ columns of ${\cal G}(\epsilon)$. Since $A$ is a submatrix of ${\cal G}(\epsilon_0)$, $A$ is superregular. The matrix $A$ is of the form
\[A=\left[
\begin{array}{ccccccccc}
0 & 0 & \cdots & 0 & 0 & \cdots & 0 & 0 & G_0 \\
0 & 0 & \cdots & 0 & 0 & \cdots & 0 & G_0 & G_1 \\
\vdots & \vdots & \ddots & \vdots & \vdots & \ddots & \vdots & \vdots & \vdots \\
0 & 0 & \cdots & 0 & 0 & \cdots & G_{\nu_\ell-2} & G_{\nu_\ell-1} & G_{\nu_\ell}\\
0 & 0 & \cdots & 0 & 0 & \cdots & G_{\nu_\ell-1} & G_{\nu_\ell} &  0 \\
0 & 0 & \cdots & 0 & 0 & \cdots & G_{\nu_\ell} & 0 &  0 \\
\vdots & \vdots & \ddots & \vdots & \vdots & \ddots & \vdots & \vdots & \vdots \\
0 & 0 & \cdots & 0 & G_0 & \cdots & 0 & 0 & 0\\
0 & 0 & \cdots & G_0 & G_1 & \cdots & 0 & 0 & 0\\
\vdots & \vdots & \ddots & \vdots & \vdots & \ddots & \vdots & \vdots & \vdots \\
0 & G_0 & \cdots & G_{\nu_\ell-2} &  G_{\nu_\ell-1}  & \cdots & 0 & 0 & 0 \\
 G_0 & G_1 & \cdots & G_{\nu_\ell-1} & G_{\nu_\ell} & \cdots & 0 & 0 & 0
 \end{array}
\right]
\]
Suppose that the weight of $u^{\epsilon_0}(z)$ is $t$. Let $B$ be the matrix formed by the $t$ columns of $A$ that are multiplied by the nonzero entries of $u^{\epsilon_0}(z)$ to obtain $v^{\epsilon_0}(z)$.

If all of the $n(\epsilon_0+1)$ rows of $B$ are nonzero, since $B$ has at most $k(\epsilon_0+1)$ nonzero columns and $B$ is superregular, then using theorem \ref{SRZeros} and (\ref{epsilon0}), we have
\begin{eqnarray*}
\mathrm{wt}(v^{\epsilon_0}(z)) & \geq & (n-k)(\epsilon_0+1)+1\\
& \geq &  n(\nu_1+1)-m_1+1.
\end{eqnarray*}

Now, suppose that $B$ has rows with all entries equal to zero (the number of such rows is always a multiple of $n$ by the structure of the matrix ${\cal G}(\epsilon)$). Since we may assume without loss of generality that $u_0$ has nonzero entries, the first $n(\nu_1+1)$ rows of $B$ are nonzero.Let $c$ be the largest integer such that the first $cn$ rows of $B$ are nonzero. Notice that $c=\nu_b+a$, for some $b\in\{1,\dots,\ell\}$ and $a$ such that
\[\left\{\begin{array}{ll}
1\leq a\leq\nu_{b+1}-\nu_b & \quad\mbox{if}\quad b< \ell.\\
1\leq a\leq\epsilon_0-\nu_\ell+ 1 & \quad\mbox{if}\quad b=\ell.
\end{array}\right.\]

With a similar argument as the one we used in the case $\epsilon\leq\epsilon_0$, we may conclude that the number of columns of $B$ is at most $\gamma_1m_1+\cdots+\gamma_\ell m_\ell$, where $\gamma_i=a+\nu_b-\nu_i$, for $1\leq i\leq \ell$. Let $B'$ be the matrix formed by the first $n(\nu_b+a)$ rows of $B$. Using the superregularity of $B'$ and theorem \ref{SRZeros}, we obtain
\begin{eqnarray*}
\mathrm{wt}(v^{\epsilon_0}(z)) & \geq & n(\nu_b+a)-\displaystyle{\sum_{i=1}^b(a+\nu_b-\nu_i)m_i}+1\\
& \geq & n(\nu_1+1)-m_1+1.
\end{eqnarray*}

Finally we prove that $\mathcal{C}$ has Forney indices $\nu_1, \nu_2, \cdots, \nu_{\ell}$ with multiplicities $m_1, m_2, \dots, m_{\ell}$, respectively. For that, it is sufficient to prove that $G(z)$ is column reduced, i.e., that $G^{hc}$ is full column rank. Notice that $G^{hc}$ is a submatrix of $G(\nu_{\ell} - \nu_1)$ constituted by nonzero entries, which means that all its $k \times k$ minors are different from zero. Consequently, $G^{hc}$ is full column rank and $G(z)$ is column reduced.

Therefore, the convolutional code ${\cal C} = \mbox{im}_{\mathbb F[z]}G(z)$ is an optimal $(n,k,\nu_1,m_1)$ convolutional code.
\end{proof}

Given any $n$ and $k$ with $n>k$, any $0\leq \nu_1< \dots< \nu_{\ell}$ and $m_1, \dots, m_{\ell}$ such that $k=m_1+\cdots+m_\ell$, we are going to construct an optimal $(n,k,\nu_1,m_1)$ convolutional code of rate $k/n$ over a finite field $\mathbb{F}=\mathbb{F}_{p^N}$, for $p$ prime and $N$ depending on $n$, $\nu_\ell$ and $\epsilon_0$ defined in (\ref{epsilon0}), with Forney indices $\nu_1,\dots,\nu_{\ell}$ and corresponding multiplicities $m_1, \dots, m_{\ell}$.

For $1\leq j\leq \ell-1$ and $0\leq i\leq \nu_\ell$, define $G_i \in \mathbb{F}^{n\times k}$ by
\begin{equation}\label{Gi}
G_i=[\gamma_{r\,s}(i)]\;\mbox{for}\;
\gamma_{r\,s}(i)=\left\{\begin{array}{lll}
\alpha^{2^{ni+r+s-2}} & \mbox{if} & i\leq\nu_1 \\
\alpha^{2^{ni+r+s-2}} & \mbox{if} & s>\displaystyle{\sum_{\kappa=1}^j m_\kappa}\quad\mbox{and}\quad\nu_j< i\leq \nu_{j+1} \\
0 & \mbox{if} & s\leq\displaystyle{\sum_{\kappa=1}^j m_\kappa}\quad\mbox{and}\quad\nu_j< i\leq \nu_{j+1}
\end{array}\right.
\end{equation}
where $\alpha$ is a primitive element of the finite field $\mathbb{F}$. If $N$ is greater than any exponent of $\alpha$ appearing as a nontrivial term of any minor of ${\cal G}(\epsilon_0)$ then ${\cal G}(\epsilon_0)$ satisfy the conditions of theorem \ref{diagonal} and so it is superregular. Using theorem \ref{MDSCC} we obtain the following result.

\begin{corollary}
Let $n, k, \ell \in \mathbb N$ such that $\ell \leq k < n$ and $\nu_1, \dots, \nu_{\ell}$, $m_1, \dots, m_{\ell}$ integers such that $0 \leq \nu_1 < \cdots < \nu_{\ell}$ and $m_1 + m_2 + \cdots + m_{\ell}=k$. Moreover, let $G(z)= \sum_{i \geq 0} G_i z^i \in \mathbb F[z]^{n \times k}$ with $G_i$ defined in (\ref{Gi}) and $\mathbb{F}=\mathbb{F}_{p^N}$, for $p$ prime and $N$ sufficiently large, so that ${\cal G}(\epsilon_0)$ (defined in (\ref{Gd}), with $\epsilon_0$ defined in (\ref{epsilon0})) satisfy the conditions of theorem \ref{diagonal}. Then ${\cal C} = \mbox{Im}_{\mathbb F[z]}G(z)$ is an optimal $(n,k,\nu_1,m_1)$ convolutional code with Forney indices $\nu_1, \dots, \nu_{\ell}$ with multiplicities $m_1, \dots, m_{\ell}$, respectively.
\end{corollary}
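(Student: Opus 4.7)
The plan is to deduce the corollary directly from Theorems \ref{MDSCC} and \ref{diagonal} by verifying the necessary structural hypotheses on the matrices $G_i$ defined in (\ref{Gi}).

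First, I will check that $G_i$ has the zero pattern required by the hypothesis of Theorem \ref{MDSCC}. Reading off (\ref{Gi}) with the convention $\nu_0 := -1$, whenever $i \in (\nu_{j'-1}, \nu_{j'}]$ the zero entries of $G_i$ lie exactly in the columns $s \leq m_1 + \cdots + m_{j'-1}$; hence the last $m_{j'} + \cdots + m_\ell$ columns are entirely filled with nonzero powers of $\alpha$. For $i \leq \nu_j$ one has $j' \leq j$, so the last $m_j + \cdots + m_\ell$ columns lie inside the nonzero block, as required.

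Next, I will establish the superregularity of ${\cal G}(\epsilon_0)$ by checking the four conditions of Theorem \ref{diagonal}. The entry of ${\cal G}(\epsilon_0)$ in block position $(p,q) \in \{0, \ldots, \nu_\ell + \epsilon_0\} \times \{0, \ldots, \epsilon_0\}$ and sub-position $(r,s) \in \{1,\ldots,n\} \times \{1,\ldots,k\}$ equals $\gamma_{rs}(p - \epsilon_0 + q)$, so when nonzero it is $\alpha^{2^{n(p-\epsilon_0+q)+r+s-2}}$ with a nonnegative integer exponent, giving (1). For (3) and (4), moving one step to the right or downward in ${\cal G}(\epsilon_0)$ always increases the exponent of $2$ by at least $1$: by $1$ inside a block, by $n-k+1 \geq 2$ when crossing blocks to the right, and by $1$ when crossing blocks downward. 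Hence along every monotone segment of nonzero entries the quantity $\beta$ at least doubles at each step, yielding $2\beta_{i\ell} \leq \beta_{i\ell'}$ for $\ell < \ell'$ with both entries nonzero, and the analogous inequality for columns.

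The one genuinely nontrivial verification is condition (2), since zeros in ${\cal G}(\epsilon_0)$ come from three sources: the upper Toeplitz triangle $p - \epsilon_0 + q < 0$, the lower Toeplitz triangle $p - \epsilon_0 + q > \nu_\ell$, and the intra-block structural zeros from (\ref{Gi}) occurring at columns $s \leq m_1 + \cdots + m_{j'-1}$ whenever $\nu_{j'-1} < p - \epsilon_0 + q \leq \nu_{j'}$. I will dispatch each case. For an upper-Toeplitz zero, decreasing $q$ (moving left in the same row) only decreases $p - \epsilon_0 + q$ further below zero, so every entry to the left of it in the same row is zero. For a lower-Toeplitz zero, increasing $p$ (moving down) only increases $p - \epsilon_0 + q$, so every entry below it in the same column is zero. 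For an intra-block structural zero, increasing $p$ pushes $i = p - \epsilon_0 + q$ into some subsequent interval $(\nu_{j''-1}, \nu_{j''}]$ with $j'' \geq j'$; since $s \leq m_1 + \cdots + m_{j'-1} \leq m_1 + \cdots + m_{j''-1}$ the entry remains a structural zero, and once $p$ is large enough we enter the lower Toeplitz region, where the entry is again zero. This settles (2). Choosing $N$ larger than every exponent of $\alpha$ appearing in a nontrivial term of any minor of ${\cal G}(\epsilon_0)$ then makes Theorem \ref{diagonal} applicable, so ${\cal G}(\epsilon_0)$ is superregular, and Theorem \ref{MDSCC} delivers the claim that ${\cal C}=\mbox{Im}_{\mathbb F[z]}G(z)$ is an optimal $(n,k,\nu_1,m_1)$ code with the prescribed Forney indices and multiplicities.
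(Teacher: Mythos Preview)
Your proof is correct and follows the same route the paper indicates: verify that the explicit $G_i$ from (\ref{Gi}) meet the structural hypothesis of Theorem \ref{MDSCC}, check that ${\cal G}(\epsilon_0)$ satisfies conditions (1)--(4) of Theorem \ref{diagonal}, and conclude. The paper itself merely asserts that these conditions hold and then invokes the two theorems; you supply the missing verifications (in particular the case analysis for condition (2) and the exponent-monotonicity argument for (3) and (4)), so your argument is strictly more detailed but not different in substance.
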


\section{Conclusion}\label{sec:conclusions}

In this paper we have introduced a very general class of superregular matrices and we have shown that these matrices have the property that any combination of its columns have the maximum number of nonzero elements possible for its configuration of zeros. It turns out that this important property can be used to present novel constructions of convolutional codes that attain the maximum possible distance for some fixed parameters of the code, namely, the rate and the Forney indices. These results answered some open questions on distances and constructions of convolutional codes posted in \cite{GRS2006,mceliece98}.

\bibliographystyle{plain}
\bibliography{code1,Ref-Articles,Ref-Climent-2,Ref-PhDT-MT}

\end{document}